\documentclass[12pt]{article}
\usepackage[a4paper]{geometry}
\geometry{textwidth=17cm,inner=1.5cm,top=2cm,textheight=24cm}
\usepackage[italian,english]{babel}

\usepackage{fancyhdr}
\pagestyle{fancy}
\lhead{}
\rhead{}
\rfoot{\footnotesize{{\rred If available, please cite the published version}}}

\usepackage{amsmath}

\usepackage{amsfonts}
\usepackage{amstext}
\usepackage{amssymb}
\usepackage{amsthm}
\usepackage{amscd}
\usepackage{soul}
\usepackage{comment}

\usepackage[pagebackref,draft=false]{hyperref}
\hypersetup{colorlinks,
linkcolor=myrefcolor,
citecolor=mycitecolor,
urlcolor=myurlcolor}

\usepackage[capitalize]{cleveref}
\usepackage{caption}

\usepackage{etaremune}

\usepackage{xcolor}
\definecolor{myurlcolor}{rgb}{0,0,0.4}
\definecolor{mycitecolor}{rgb}{0,0.5,0}
\definecolor{myrefcolor}{rgb}{0.5,0,0}
\usepackage{graphicx}
\usepackage{tikz}
\usepackage{tikz-cd}
\usepackage{mathrsfs}
\definecolor{ggray}{gray}{0.25}
 \definecolor{rred}{rgb}{0.5,0,0}

\usepackage{etoolbox}
\usepackage{makeidx}
\usepackage{sectsty}
\usepackage{dsfont}
\usepackage{enumitem} 
\usepackage[]{latexsym}
\usepackage{braket}
\usepackage{caption}
\usepackage[utf8]{inputenx}
\usepackage[T1]{fontenc}
\usepackage{lmodern}
\usepackage{textcomp}
\usepackage{microtype}
\usepackage{totcount}
\usepackage{blindtext}


\newtheorem{corollary}{Corollary}
\newtheorem{proposition}{Proposition}

\theoremstyle{definition}
\newtheorem{definition}{Definition}
\theoremstyle{remark}
\newtheorem{example}{Example}
\theoremstyle{remark}
\newtheorem{remark}{Remark}

\newcommand{\be}{\begin{equation}}
\newcommand{\ee}{\end{equation}}
\newcommand{\bea}{\begin{eqnarray}}
\newcommand{\eea}{\end{eqnarray}}
\newcommand{\vsp}{\vspace{0.4cm}}
\newcommand{\grit}[1]{{\bfseries {\itshape {#1}}}}

\newcommand{\rred}[1]{\color{rred}{#1}}


\newcommand{\ra}{\rightarrow}

\newcommand{\lra}{\longrightarrow}

\newcommand{\hh}{\mathcal{H}}
\newcommand{\bh}{\mathcal{B}(\mathcal{H})}

\newcommand{\Tr}{\textit{Tr}}


\newcommand{\stsp}{\mathscr{S}}

\newcommand{\appa}{\mathscr{A}}
\newcommand{\appas}{\mathscr{A}_{sa}}

\newcommand{\mappa}{\mathscr{M}}
\newcommand{\cappa}{\mathscr{C}}

\newcommand{\gapp}{\mathscr{G}}

\newcommand{\stav}{\mathscr{V}}

\newcommand{\pos}{\mathscr{P}}

\newcommand{\gr}{\mathrm{g}}

\newcommand{\Gg}{\mathrm{G}}

\title{Differential geometric aspects of parametric estimation theory for states on finite-dimensional $C^{\star}$-algebras}

\author{F. M. Ciaglia$^{1,3}$  \href{https://orcid.org/0000-0002-8987-1181}{\includegraphics[scale=0.7]{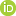}}, J. Jost$^{1,4}$\href{https://orcid.org/0000-0001-5258-6590}{\includegraphics[scale=0.7]{ORCID.png}}, L. Schwachh\"{o}fer$^{2,5}$\href{https://orcid.org/0000-0002-4268-6923}{\includegraphics[scale=0.7]{ORCID.png}}\\
\footnotesize{$^{1}$\textit{Max Planck Institute for Mathematics in the Sciences,  04103  Leipzig, Germany}} \\
\footnotesize{$^{2}$\textit{Faculty for Mathematics, TU Dortmund University,  44221  Dortmund, Germany}} \\
\footnotesize{$^{3}$\textit{ e-mail: \texttt{ciaglia[at]mis.mpg.de}, \texttt{florio.m.ciaglia[at]gmail.com}}} \\
\footnotesize{$^{4}$\textit{ e-mail: \texttt{jjost[at]mis.mpg.de}}}\\
\footnotesize{$^{5}$\textit{ e-mail: \texttt{lschwach[at]math.tu-dortund.de. }}}
}

\date{}

\begin{document}

\maketitle

\begin{abstract}
A geometrical formulation of estimation theory for  finite-dimensional $C^{\star}$-algebras is presented.
This formulation allows to deal with the classical and quantum case in a single, unifying mathematical framework.
The derivation of the Cramer-Rao and Helstrom bounds for parametric statistical models with discrete and finite outcome spaces is presented.
\end{abstract}

\tableofcontents

\thispagestyle{fancy}

 \section{Introduction}

The purpose of this work is to present the formulation of estimation theory in the framework of $C^{\star}$-algebras, with particular attention to  differential geometric aspects.
Although estimation theory is a well-developed subject both in the classical case of probability distributions \cite{AAVV1-1987,Amari-2016,A-N-2000,A-J-L-S-2017} and in the quantum case of density operators \cite{BN-G-J-2003,Helstrom-1967,Helstrom-1968,Helstrom-1969,Helstrom-1976,Holevo-2001,L-Y-L-W-2020,T-A-2014}, and even if quantum estimation theory builds upon classical estimation theory, there is no unifying picture for these subjects.
By unifying picture, we mean a mathematical framework in which estimation theory is placed in such a way that the classical and quantum cases appear as particular instances of the general theory.
We believe that such a formulation may be helpful in  obtaining a better understanding of the similarities and differences of classical and quantum estimation theory.
This idea may be considered as the driving force of this work.

Roughly speaking, the main problem estimation theory tries to address is to infer the value of some parameters characterizing the state of the ``physical system'' under investigation by the theoretical manipulation of the outcomes of experiments performed on such system.

In the classical case, the state of the system is described by a probability distribution on the space of outcomes of the experiment,  and the goal of estimation theory is to infer the value of some parameters characterizing the true probability distribution describing the system (e.g., the mean and/or variance for the case of Gaussian distributions) from the outcomes of the experiment.
As such, estimation theory is well-developed both in its asymptotic and non-asymptotic regimes.
Arguably, one little black spot of the theory is that the parameter spaces characterizing the probability distributions under study are usually taken to be homeomorphic to open subsets of some finite-dimensional Euclidean space.
Even if this assumption is justified in most of the models, it necessarily introduces some simplifications related with the ``nice'' structure of the parameter spaces as smooth manifolds.
As an example, the existence of global coordinates often lead to the definition of objects that are coordinate-dependent (see for instance  \cite[ch. 4]{A-N-2000} where it is clearly stated that the notion of unbiased estimator developed there is coordinate-dependent).
We believe it is healthy to formulate the theory in order to avoid these issues and better comprehend the coordinate-independent aspects of the theory.
This geometric attitude already proved itself useful in classical Newtonian, Lagrangian and Hamiltonian mechanics \cite{A-M-1978,B-C-T-2017,C-G-M-M-M-L-2006,M-F-LV-M-R-1990,M-M-1995,M-M-S-1984}, in  thermodynamics and statistical physics \cite{Barbaresco-2016,B-LM-N-2015,Marle-2016,Souriau-1978},  and in quantum mechanics \cite{C-M-P-1990,C-P-1990,E-I-M-M-2007,I-M-S-S-V-2016}.
Clearly, there already have been efforts to formulate classical estimation theory in this direction \cite{Cencov-1982,G-O-2006,Hendriks-1991,O-C-1995}, and here we try to encapsulate the spirit of these works in our formulation of estimation theory on $C^{\star}$-algebras.

On the other hand, in the quantum case, the state of the system is no-longer a probability distribution, it is a density operator on the Hilbert space associated with the system.
This adds, at the same time, complexity and richness to the problem of estimation.
A first layer of added complexity refers to the need of a statistical interpretation of a given quantum state.
Since the dawn of quantum mechanics, the issue of the physical interpretation of Schr\"{o}dinger's wavefunction was recognized to be a fundamental question.
The idea of interpreting the square modulus of the wavefunction as a probability distribution paved the way to the statistical interpretation of quantum states through  what is now called the \grit{Born rule} \cite{Landsman-2017}.
Essentially, the Born rule describes a ``procedure'' to associate a probability distribution on a suitable outcome space with a given quantum state.
Clearly, this depends on both the quantum state and the choice of the outcome space, and this means that there is more than one way to associated probability distributions with quantum states.
From the mathematical point of view, the choice of the statistical
interpretation is described by a positive operator-valued measure (POVM) on the Hilbert space of the system \cite{Holevo-2001,Kraus-1983}.
Accordingly, in order to set up the estimation problem for a given parametric model of quantum states, we need to operate a preliminary choice concerning the POVM ``inducing'' the statistical interpretation.
Of course, this choice influences the estimation problem we set up, and different choices in general lead to different solutions of the associated estimation problem.
All this obviously adds a layer of complexity to the estimation problem, but, simultaneously, it opens new possibilities to outperform classical limits of estimation because of the peculiar features of quantum states (e.g., entanglement).
Indeed, in the quantum case it is possible to give a precise mathematical meaning to the assertion ``measuring one copy N times is less informative than measuring N copies one single time'' \cite{Dowling-1998,G-L-M-2004,S-M-1995,T-A-2014,Y-MC-K-1986}.
This assertion relies on   the phenomenon of entanglement which is absent in the classical realm, and thus highlights an important difference between the classical and quantum estimation theory.

\vsp

As mentioned before, the goal of this article is to introduce a theoretical
framework that allows us to treat the classical and quantum case simultaneously.
Specifically, our choice is  to consider the theory of  $C^{\star}$-algebras  as the backbone of our construction because both probability distributions and quantum states may be realized as linear functionals on suitable $C^{\star}$-algebras.
In the case of probability distributions, this is basically the duality between probability measures and functions given by the Riesz theorem.
For quantum states, this comes directly from the axiomatic structure of the theory.
The main difference between the two cases is that the algebras involved are commutative in the former case and non-commutative in the latter.
In this general framework, probability distributions and quantum states represent different realizations of the notion of \grit{state} on a $C^{\star}$-algebra $\appa$.
The space of states $\stsp$ is a convex subset of the dual space of $\appa$, and the study of its differential geometry is a fascinating subject.
The rich algebraic structure of $C^{\star}$-algebras translate into a rich geometrical structure for their spaces of states that is perfectly suited for the formulation of parametric estimation theory.

The use of $C^{\star}$-algebras as a theoretical framework to study the   geometry of quantum states is not new \cite{B-Z-2006,C-C-I-M-V-2019,Ciaglia-2020,C-DC-I-L-M-2017,C-J-S-2020,G-H-P-2009,G-I-2001,G-I-2003,G-K-M-2005,Hasegawa-1995,Hasegawa-2003,H-P-1997,Jencova-2002,Jencova-2003,Petz-1993,Petz-1996,P-S-1996,P-T-1993,Uhlmann-1976,Uhlmann-2011}.
However, the focus was essentially always on the algebra of bounded linear operators on the Hilbert space of the quantum system, and not on a generic $C^{\star}$-algebra.
While this restriction may seem not particularly relevant for most practical purposes, it is certainly so from the theoretical point of view.
Indeed, some recent developments
\cite{C-DC-I-M-02-2020,C-DC-I-M-2020,C-I-M-2018,C-I-M-02-2019,C-I-M-03-2019,C-I-M-05-2019}
point out  the possibility of describing quantum systems whose  associated
$C^{\star}$-algebras are groupoid algebras, and thus are in principle more general than the algebra of bounded linear operators.
Consequently, a reformulation of the well-known results for an arbitrary $C^{\star}$-algebra appears to be useful.

On the other hand, in the classical case, the explicit use of $C^{\star}$-algebras to investigate the geometry of probability distributions is essentially absent.
To the best of our knowledge,  the only (very nice) exceptions are the works \cite{G-N-2020,G-R-2006,G-R-02-2006}.
However, the point of view  of these works is  different from ours because they consider probability distributions as particular elements of a $C^{\star}$-algebras, while we consider them as particular linear functionals.

\vsp

Another reason why we believe it would be useful to consider the framework of $C^{\star}$-algebras is that the space of states of a $C^{\star}$-algebra is an example of space of states of \grit{general probabilistic theories} \cite{Barrett-2007,C-DA-P}.
Therefore, the study  of the differential geometry of the space of states of $C^{\star}$-algebras, and in particular the study of parametric estimation theory in this context, represents a first step toward the study of these subjects in the more comprehensive frameork of general probabilistic theories.
This intermediate step may be useful because states on $C^{\star}$-algebras benefit from the rich algebraic structure of the algebras they act upon, while states in general probabilistic theories do not necessarily have such a rich algebraic background to rely on.
Consequently, a first study of the richer case may lead to results that can be later generalized to the less rich case once an appropriate and judicious process of extrapolation is pursued.

We confine ourselves to the case of finite-dimensional $C^{\star}$-algebras because, at this preliminary stage, we want to avoid the technical difficulties with which the infinite-dimensional case is filled.
Indeed, we are now interested in exposing the basic aspects of the theory in order to have a solid background on which future works can rely on.
In the infinite-dimensional case, the technical difficulties would often obscure the conceptual aspects and this unavoidably leads to be less comunicative.
Moreover, it is even not yet clear what are the geometrical players on the fields when infinite dimensions are considered because there is no general consensus on which are the most appropriate manifolds of states to consider in this case (see \cite{A-J-L-S-2017,A-J-L-S-2018,Bona-2004,C-I-J-M-2019,C-L-M-1983,C-M-P-1990,G-K-M-S-2018,G-P-1998,G-S-2000,Jencova-2006,Newton-2012,P-S-1995,Streater-2004} for some examples).

Incidentally, the restriction to the finite-dimensional case seems to affect more   the classical case, rather than  the quantum case.
Indeed, classical estimation theory essentially deals with parametric models of probability distributions on spaces which are neither discrete nor finite (think for instance to normal distributions), and these cases are naturally associated with infinite-dimensional $C^{\star}$-algebras.
The case of parametric models of probability distributions on discrete and finite spaces is usually less studied because it seldomly presents itself in applications.
In the context of quantum information theory, the situation is quite the opposite, and the vast majority of the models considered refer to quantum system with a finite-dimensional Hilbert space, and thus, with an associated finite-dimensional $C^{\star}$-algebra.
The infinite-dimensional case usually deals only with pure-state models for which the underlying manifold of states is rather friendly, being the Hilbert manifold of a complex projective space associated with a separable, complex Hilbert space.

\vsp

The content of this work builds on well-known and estabilished results in the context of both classical and quantum estimation theory.
However, the presentation of these results in the unifying framework of $C^{\star}$-algebras is essentially new, as are the proofs of some results.
We believe that this attitude may be particularly useful in future research dealing with the infinite-dimensional case, and dealing with the comparison of classical and quantum methods.
Accordingly, this work should be considered more as a first, preliminary step in a research program aimed at the understanding of the unification of classical and quantum estimation theory rather than an exposition of a finite theory,   and the focus of the work is more on the discussion of general structures rather than on the presentation of specific examples.

\vsp

The article  is structured as follows.
In section \ref{sec: differential geometry of states}, some differential geometric aspects of finite-dimensional $C^{\star}$-algebras and of their spaces of states are recalled.
In section \ref{sec: Parametric models of states on C-algebras}, the notion of parametric model of states on a  $C^{\star}$-algebra $\appa$ is introduced, and the notion of Symmetric Logarithmic Derivative used in quantum information theory is generalized to the $C^{\star}$-algebraic setting.
In section \ref{sec: statistical models}, the notion of parametric statistical  model associated with a given parametric model of states is introduced.
This notion represents the bridge between the models of states on a possibly noncommutative $C^{\star}$-algebra and the models of probability distributions used in classical estimation theory.
Also, the notion of multiple round model and its geometrical properties are briefly discussed.
In section \ref{sec: estimation theory}, the problem of estimation theory is formulated in the $C^{\star}$-algebraic framework, and the notion of manifold-valued estimator is recalled.
In section \ref{sec: cramer-rao bound for manifold-valued estimators}, a proof of the Cramer-Rao bound for manifold-valued estimators on finite outcome spaces is given following the work of Hendriks \cite{Hendriks-1991}.
Finally, in section \ref{subsec: helstrom bound for manifold-valued estimators}, the generalization of the Helstrom bound used in quantum information theory to the $C^{\star}$-algebraic framework is given.

\section{Differential geometric aspects of the space of states}\label{sec: differential geometry of states}

We start with a brief summary of $C^{\star}$-algebras \cite{Blackadar-2006,B-R-1987-1,Sakai-1971,Takesaki-2002}.
Let $A$ be a complex algebra with identity $\mathbb{I}$.
If there is an anti-linear map $\dagger\colon A\ra A$ such that $(\mathbf{a}^{\dagger})^{\dagger}=\mathbf{a}$ for all $\mathbf{a}\in A$, and such that $(\mathbf{ab})^{\dagger}=\mathbf{b}^{\dagger}\mathbf{a}^{\dagger}$ for all $\mathbf{a},\mathbf{b}\in A$, then $\dagger$ is called an \grit{involution}, and $(A,\dagger)$ an \grit{involutive algebra}.
If there is a norm $\|\cdot\|$ on $A$ turning it into a Banach space satisfying the additional relations $\|\mathbf{ab}\|\,\leq\,\|\mathbf{a}\|\,\|\mathbf{b}\|$ and $\|\mathbf{a}\mathbf{a}^{\dagger}\|\,\leq\,\|\mathbf{a}\|^{2}$ for all $\mathbf{a},\mathbf{b}\in A$, then  $(A,\dagger,\|\cdot\|)$ is called a $C^{\star}$\grit{-algebra}, and, for the sake of notational simplicity, it will be denoted simply by $\appa$.

An element $\mathbf{a}\in\appa$ is called \grit{self-adjoint} if $\mathbf{a}=\mathbf{a}^{\dagger}$.
The space of self-adjoint elements in $\appa$ is denoted as $\appa_{sa}$.
It is a real Banach space whose dual space is denoted as $\stav$,  and there is a direct sum decomposition
\begin{equation} \label{eq:decomp-A}
\appa = \appa_{sa} \oplus \imath\; \appa_{sa},
\end{equation}
where $\imath$ is the imaginary unit.
 

An element $\mathbf{b}\in\appa$ is called \grit{positive} if there exists $\mathbf{a}\in\appa$ such that $\mathbf{b}=\mathbf{a}^{\dagger}\mathbf{a}$.
Clearly, a positive element $\mathbf{b}$ is self-adjoint, and it can be proved that there is a unique self-adjoint element $\mathbf{s}$ such that $\mathbf{b}=\mathbf{s}^{2}$.

An element $\gr\in\appa$ is called \grit{invertible} if there is another element written as $\gr^{-1}$ such that $\gr\,\gr^{-1}=\gr^{-1}\,\gr=\mathbb{I}$.
The set of invertible elements in $\appa$ is denoted as $\gapp$, and it is a real Banach-Lie group the Banach-Lie algebra of which is $\appa$ endowed with the commutator \cite{Chu-2012,Upmeier-1985}.
An element $\mathbf{u}\in\gapp$ is called \grit{unitary} if $\mathbf{u}^{-1}=\mathbf{u}^{\dagger}$.
The set of unitary elements in $\appa$ is denoted as $\mathscr{U}$ and it is a real Banach-Lie subgroup of $\gapp$,  called the {\em unitary group of $\appa$}, whose  Banach-Lie algebra is the  subspace $\imath\; \appa_{sa}$ in the decomposition (\ref{eq:decomp-A})  
endowed with the commutator inherited from $\appa$ \cite{Chu-2012,Upmeier-1985}.

Let $\appa^{\star}$ be the complex Banach dual  of $\appa$.
An element $\xi\in\appa^{\star}$ is called a \grit{self-adjoint} linear functional if $\xi(\mathbf{a}^{\dagger})=\overline{\xi(\mathbf{a})}$.
The set of self-adjoint linear functionals is precisely the real Banach dual $\stav$ of $\appas$.
A self-adjoint linear functional $\omega$ is called \grit{positive} if $\omega(\mathbf{a})\geq 0$ for every positive element $\mathbf{a}\in\appa$.
A positive element $\omega$ is called \grit{faithful} if $\omega(\mathbf{a})=0$ implies $\mathbf{a}=\mathbf{0}$ for all positive elements in $\appa$.
The set of positive elements is denoted as $\pos$.
A positive linear functional $\rho$ is called a \grit{state} if $\rho(\mathbb{I})=1$.
The set of states is denoted as $\stsp$.
 
In the following, we will focus only on finite-dimensional $C^{\star}$-algebras.
Given a self-adjoint element $\mathbf{a}\in\appas$, we write $f_{\mathbf{a}}$ for the linear function on $\stav$ given by
\be\label{eqn: linear functions}
f_{\mathbf{a}}(\xi)\,=\,\xi(\mathbf{a})\,,
\ee
as well as for its restrictions to the various submanifolds of $\stav$ we will introduce below (with an evident abuse of notation).

There is a  group action of $\gapp$ on $\stsp$ given by \cite{C-I-J-M-2019,C-J-S-2020}
\be\label{eqn: action of G on states}
\rho_{\gr}\,\colon\;\;\rho_{\gr}(\mathbf{c})\,=\,\frac{\rho(\gr^{\dagger}\,\mathbf{c}\,\gr)}{\rho(\gr^{\dagger}\,\gr)}\,\equiv\,\Phi(\gr,\rho)\;\;\;\;\forall \;\mathbf{c}\in\appa ,
\ee
and the space of states $\stsp$ decomposes into the disjoint union of orbits of the $\gapp$-action, and evidently, each such orbit is a homogeneous space.
 

Recalling that $\appa$ endowed with the commutator is the Lie algebra of $\gapp$, the fundamental vector fields of $\Phi$ are labelled by elements of $\appa$.
 Recalling \eqref{eq:decomp-A}, we  write an element in $\appa$ as
$\mathbf{a} +\imath \mathbf{b}$ where $\mathbf{a},\mathbf{b}\in\appas$, and
$\imath$ is the imaginary unit.
Accordingly, we write $\Gamma_{\mathbf{ab}}$ for the fundamental vector field associated with $\frac{1}{2}\left(\mathbf{a} +\imath \mathbf{b}\right)$.
A direct computation shows that the tangent vector $\Gamma_{\mathbf{ab}}(\rho)$, identified with a self-adjoint linear functional in $\stav$ because the orbit $\mathcal{O}$ is an immersed submanifold of $\stav$, is given by
\be
\left(\Gamma_{\mathbf{ab}}\,f_{\mathbf{c}}\right)(\rho)\,=\,\left(\Gamma_{\mathbf{ab}}(\rho)\right)(\mathbf{c}) \,=\,\rho(\{\mathbf{a},\mathbf{c}\}) - \rho(\mathbf{a})\rho(\mathbf{c}) + \rho([[\mathbf{b},\mathbf{c}]])\;\;\;\;\forall \;\mathbf{c}\in\appa,
\ee
where   $\{,\}$ and $[[,]]$ denote, respectively, the Jordan product and the Lie product in $\appa$ given by 
\be\label{eqn: Lie and Jordan products}
\begin{split}
\{\mathbf{c},\mathbf{d}\}&\,:=\,\frac{1}{2}\,\left(\mathbf{cd} + \mathbf{dc}\right) \\
[[\mathbf{c},\mathbf{d}]]&\,:=\,\frac{1}{2\imath}\,\left(\mathbf{cd} - \mathbf{dc}\right).
\end{split}
\ee
Note that $\{,\}$ and $[[,]]$ preserve $\appas$, and actually turn it into a Jordan-Lie algebra \cite{F-F-I-M-2013,Landsman-1998}.

We set
\be\label{eqn: hamiltonian and gradient vector fields}
\begin{split}
\mathbb{Y}_{\mathbf{a}}&\,:=\,\Gamma_{\mathbf{a}\mathbf{0}} \\
\mathbb{X}_{\mathbf{b}}&\,:=\,\Gamma_{\mathbf{0}\mathbf{b}},
\end{split}
\ee
and we call $\mathbb{Y}_{\mathbf{a}}$ a gradient vector field (the origin of the name will be explained below), and $\mathbb{X}_{\mathbf{b}}$ a Hamiltonian vector field.
It is not hard to show that the Hamiltonian vector fields give an anti-representation of the Lie algebra of the group $\mathscr{U}\subset\gapp$ of unitary element of $\gapp$ \cite{C-I-J-M-2019,C-J-S-2020}.
This Lie algebra anti-representation integrates to a left action of $\mathscr{U}$ on $\mathcal{O}$ given by the restriction of $\Phi$ to $\mathscr{U}$.

If we fix a basis $\{\mathbf{e}^{j}\}_{j=1,...,N}$ of self-adjoint elements in $\appa$ (where $\mathrm{dim}(\appa)=N$), we may introduce the structure constants $d^{jk}_{l}$ and $c^{jk}_{l}$ of the Jordan and Lie products in equation  \eqref{eqn: Lie and Jordan products} by setting
\be
\begin{split}
\{\mathbf{e}^{j},\mathbf{e}^{k}\}&\,=\,d^{jk}_{l}\,\mathbf{e}^{l} \\
[[\mathbf{e}^{j},\mathbf{e}^{k}]]&\,:=\,c^{jk}_{l}\,\mathbf{e}^{l}.
\end{split}
\ee
Then, the gradient and Hamiltonian vector fields are easily seen to be given by
\be
\begin{split}
\mathbb{Y}_{\mathbf{a}}&\,=\,\left(d^{jk}_{l}a_{k}x^{l} - f_{\mathbf{a}}\,x^{j} \right)\,\frac{\partial}{\partial x^{j}} \\
\mathbb{X}_{\mathbf{b}}&\,:=\,c^{jk}_{l}b_{k}x^{l}\,\frac{\partial}{\partial x^{j}},
\end{split}
\ee
where $\{x^{j}\}_{j=1,...,N}$ is the Cartesian coordinate system on $\stav$ associated with the dual basis $\{\mathbf{e}_{j}\}_{j=1,...,N}$ of $\{\mathbf{e}^{j}\}_{j=1,...,N}$.

\begin{example}[The probability simplex]\label{ex: the probability simplex}
If we endow $C(\mathcal{X}_{n})$   with the involution given by complex
conjugation, and with the supremum norm, it is not hard to prove that it is a $C^{\star}$-algebra.
We denote this $C^{\star}$-algebra as $\cappa_{n}$.
Let $\mathbf{e}^{j}\in\cappa_{n}$ be the `delta function' at
$\mathbf{x}_{j}\in\mathcal{X}_{n}$ (i.e.,
$\mathbf{e}^{j}(\mathbf{x}_{k})=\delta_{k}^{j}$), then
$\{\mathbf{e}^{j}\}_{j=1,..,n}$ is clearly a basis for $\cappa_{n}$ (seen as a
vector space) made up of positive, self-adjoint elements, and we have
\be
\sum_{j=1}^{n}\,\mathbf{e}^{j}\,=\,\mathbf{1}_{n},
\ee
where $\mathbf{1}_{n}$ is the identity element in $\cappa_{n}$ (i.e., the identity function on $\mathcal{X}_{n}$). 
Consequently, we can build the dual basis $\{\mathbf{e}_{j}\}_{j=1,..,n}$, and a state $\rho$ on $\cappa_{n}$ is easily seen to be written as 
\be\label{eqn: dual basis for cappan}
\rho\,=\,p^{j}\,\mathbf{e}_{j},
\ee
where the real numbers $p^{j}=\rho(\mathbf{e}^{j})$ are non-negative and are subject to the constraint
\be
\sum_{j=1}^{n}\,p^{j}\,=\,1.
\ee
From this, we conclude that the space of states $\stsp$ of $\cappa_{n}$ may be identified with the $n$-simplex $\Delta_{n}$.
In the following, whenever we deal with $\cappa_{n}$ we will identify a state $\rho$ on $\cappa_{n}$ with a probability distribution in $\Delta_{n}$ and write $\mathbf{p}$ instead of $\rho$.

Let $I_{k}\subseteq\mathcal{X}_{n}$ be a subset with $k\leq n$ elements, and let $\rho$ be a state on $\cappa_{n}$ such that $p^{j}\neq 0$ if and only if $\mathbf{x}_{j}\in I_{k}$.
Then, it is not hard to check that the orbit $\mathcal{O}$ of the group $\gapp$ of invertible elements in $\cappa_{n}$ (see equation \eqref{eqn: action of G on states}) through $\rho$  coincides with the set of all those states $\varrho=q^{j}\mathbf{e}_{j}$ such that $q^{j}\neq 0$ if and only if $\mathbf{x}_{j}\in I_{k}$.
In particular, the open interior $\Delta_{n}^{+}$ of the $n$-simplex may be identified with the orbit of $\gapp$ through the state $\mathbf{p} $ with $p^{j} =\frac{1}{n}$ for all $j=1,...,n$. 

Since $\cappa_{n}$ is Abelian, it is not hard to see that the action of the unitary group $\mathscr{U}\subset\gapp$ is trivial, and thus the Hamiltonian vector fields vanish identically.
On the other hand, a direct computation shows that the structure constants $d^{jk}_{l}$ of the Jordan product  with respect to the basis $\{\mathbf{e}^{j}\}_{j=1,..,n}$ vanish unless $j=k=l$, in which case they are $1$.
\end{example}

\begin{example}[The space of density matrices]\label{ex: the space of density operators}

Consider the complex algebra $\mappa_{n}:=\mathcal{M}_{n}(\mathbb{C})$ of complex-valued, $(n\times n)$ matrices.
There is an involution $\dagger$ on $\mappa_{n}$ given by the composition of transposition with component-wise complex conjugation.
By exploiting the trace operation, it is possible to define a norm on $\mappa_{n}$ given by $\|\mathbf{a}\|^{2}\:=\Tr(\mathbf{a}^{\dagger}\mathbf{a})$, and we obtain a $C^{\star}$-algebra which will be denoted by $\mappa_{n}$.
Moreover, it is easily seen that $\mappa_{n}$ is isomorphic to the algebra $\mathcal{B}(\hh)$ of bounded linear operators on an $n$-dimensional complex Hilbert space $\hh$.
This isomorphism depends on the choice of an orthonormal basis in $\hh$, but, in the context of quantum information theory, this is in general not very limiting because a preferred choice of basis, called \grit{computational basis} \cite{N-C-2011}, is often tied to the physics of the problem under investigation.

Since $\mappa_{n}$ is finite-dimensional, it is isomorphic with its dual space, and an isomorphism is provided by the trace operation.
Specifically, a linear functional $\xi$ on $\mappa_{n}$ is identified with an element $\widehat{\xi}\in\mappa_{n}$ by means of
\be
\xi(\mathbf{a})\,:=\,\Tr(\widehat{\xi}\,\mathbf{a}).
\ee
Then, it follows that a state $\rho$  on $\mappa_{n}$ may be identified with a positive semi-definite matrix $\widehat{\rho}\in\mappa_{n}$ with unit trace.
Any such matrix is usually referred to as a \grit{density matrix}.

It is not hard to prove that the orbits of $\gapp$ are classified by the rank of the associated density matrices \cite{C-C-I-M-V-2019,C-DC-I-L-M-2017,C-I-J-M-2019,G-K-M-2005,G-K-M-2006}.
Specifically, every orbit $\mathcal{O}$ is made up of states the associated density matrices of which have fixed rank.
In particular, we have the orbit of states whose density matrices have unit rank which is the orbit of \grit{pure states} (the extremal points of the convex space of states) which is diffeomorphic to the complex projective space $\mathbb{CP}^{n}$, and the orbit of states whose density matrices have full rank (invertible) which is the orbit of \grit{faithful states}.
Note that the latter is an open subset of the affine space of self-adjoint linear functionals giving 1 when evaluated on the identity $\mathbb{I}_{n}$ of $\mappa_{n}$.

If we introduce a basis $\{\sigma^{j}\}_{j=0,...,n^{2}-1}$ on $\mappa_{n}$ in such a way that $\sigma^{0}$ coincides with the identity element $\mathbb{I}_{n}\in\mappa_{n}$, and that $\sigma^{j}$ is self-adjoint and satisfies $\Tr(\sigma^{j})=0$ for all $j\neq 0$, we can build its dual basis $\{\sigma_{j}\}_{j=0,...,n^{2}-1}$, and it follows that a state $\rho$ may be written as
\be
\rho\,=\,\frac{1}{n} \left(\sigma_{0} + x^{j}\,\sigma_{j}\right),
\ee
where $j=1,...,n^{2}-1$, and $x^{j}\in\mathbb{R}$.
Clearly, the fact that $\rho$ must be a state  imposes some constraints on the values of $x^{j}$ depending on the fact that $\rho(\mathbf{a}^{\dagger}\mathbf{a})$ must be non-negative.
There is no general closed formula to express these constraints for arbitrary $n> 2$.

For the case $n=2$ (also known as the \grit{qubit}), it is customary to select $\sigma^{1},\sigma^{2},\sigma^{3}$ to be the so-called Pauli matrices
\be
\sigma^{1}=\left(\begin{matrix} 0 & 1 \\ 1 & 0 \end{matrix}\right) \qquad \sigma^{2}=\left(\begin{matrix} 0 & -\imath \\ \imath & 0 \end{matrix}\right) \qquad \sigma^{3}=\left(\begin{matrix} 1 & 0 \\ 0 & -1 \end{matrix}\right),
\ee
where $\imath$ is the imaginary unit.
Then, $\rho$ is a state  if and only if
\be
\delta_{jk}x^{j}x^{k}\,\leq\,1\,.
\ee
This identifies a ball in the three-dimensional space spanned by the Pauli matrices which is known as the \grit{Bloch ball}.
In this case, there are only two orbits of the group $\gapp$ of invertible elements in $\mappa_{n}$, namely, the density matrices lying on the surface sphere (the \grit{pure states}), and the density matrices in the interior of the ball (the \grit{faithful states}).
\end{example}

According to \cite{C-J-S-2020}, the gradient vector fields  provide an overcomplete basis of the tangent space at each point in every orbit  $\mathcal{O}$.
Furthermore, on  every  $\mathcal{O}$ we may define a Riemannian metric tensor $\Gg$   given by
\be\label{eqn: Jordan metric tensor}
\Gg_{\rho}(\mathbb{Y}_{\mathbf{a}}(\rho),\mathbb{Y}_{\mathbf{b}}(\rho))\,=\,\rho(\{\mathbf{a},\mathbf{b}\}) - \rho(\mathbf{a})\rho(\mathbf{b})\,,
\ee  
and $\mathbb{Y}_{\mathbf{a}}$ is the gradient vector field associated with the smooth function $f_{\mathbf{a}}$ (see equation \eqref{eqn: linear functions}) by means of $\Gg$.
This metric tensor is invariant with respect to the action of the unitary group in the sense that
\be\label{eqn: unitary invariance of G}
\Phi_{\mathbf{U}}^{*}\Gg\,=\,\Gg \qquad \forall\,\mathbf{U}\in\mathscr{U},
\ee
where $\Phi_{\mathbf{U}}$ is the diffeomorphism given by
\be
\Phi_{\mathbf{U}}(\rho)\,:=\,\Phi(\mathbf{U},\rho)\,.
\ee
However, $\Gg$ is not invariant under the action of all of $\gapp$.

The metric tensor $\Gg$ turns out to be the $C^{\star}$-algebraic version of some well-known and relevant metric tensors when explicit cases are considered \cite{C-J-S-2020}. 
For instance, if $\appa=\cappa_{n}$ and $\mathcal{O}=\Delta_{n}^{+}$, then $\Gg$ coincides with the Fisher-Rao metric tensor.
If $\appa=\bh$ and $\mathcal{O}\cong\mathbb{CP}(\hh)$ is the orbit of pure states), then $\Gg$ coincides with the Fubini-Study metric tensor.
If $\appa=\bh$ and $\mathcal{O}$ is the orbit of faithful states, then $\Gg$ coincides with the Bures-Helstrom metric tensor.

According to \cite{C-J-S-2020}, the geodesic of $\Gg$ starting at $\rho\in\mathcal{O}$ with initial tangent vector $\mathbf{v}\in T_{\rho}\mathcal{O}$ reads
\be
\nu_{\rho}^{\mathbf{v}}(t)\,=\,\cos^{2}(|\mathbf{v}|t)\,\rho + \frac{\sin^{2}(|\mathbf{v}|t)}{|\mathbf{v}|^{2}}\,\rho_{\mathbf{v}} + \frac{\sin(2|\mathbf{v}|t)}{2|\mathbf{v}|}\,\rho_{\{\mathbf{v}\}},
\ee
where
\be
\begin{split}
\mathbf{v}\,=\,\mathbb{Y}_{\mathbf{a}}(\rho)\quad\mbox{ for some } &\,\mathbf{a}\in\appas\;|\,\,\rho(\mathbf{a})\,=\,0,\\
|\mathbf{v}|^{2}\,=\,\Gg_{\rho}(\mathbf{v},\mathbf{v})&\,=\,\rho\left(\mathbf{a}^{2}\right),\\
\rho_{\mathbf{v}}(\mathbf{b})\,:=\,\rho\left(\mathbf{a}\,\mathbf{b}\,\mathbf{a}\right)\quad&\forall\,\,\mathbf{a}\in\appas, \\
\rho_{\{\mathbf{v}\}}(\mathbf{b})\,:=\,\rho\left(\{\mathbf{a},\,\mathbf{b}\}\right)\quad&\forall\,\,\mathbf{a}\in\appas.
\end{split}
\ee
The geodesic $\nu_{\rho}^{\mathbf{v}}(t)$ remains inside the space of states $\stsp$ for all $t\in\mathbb{R}$, but it also exits and enters the orbit $\mathcal{O}$ containing the initial state $\rho$ at multiple times \cite{C-J-S-2020}.

\section{Parametric models of states on $C^{\star}$-algebras}\label{sec: Parametric models of states on C-algebras}

Motivated by the classical theory of parametric estimation, we will now introduce the notion of a parametric model of states on a finite-dimensional $C^{\star}$-algebra, and then reformulate the theory of parametric estimation in this theoretical framework.
This will allow for the simultaneous handling of the classical and the quantum case.

\begin{definition}\label{def: parametric model of states}
A \grit{parametric  model} of states on a (finite-dimensional) $C^{\star}$-algebra $\appa$ is a triplet  $(M,\mathit{j},\mathcal{O})$ where $M$ is a smooth manifold, $\mathcal{O}\subset \stsp$ is a $\gapp$-orbit in $\stsp$  (see section \ref{sec: differential geometry of states}), and  $\mathit{j}\colon M\ra \mathcal{O}$ is a smooth map.
If $\mathit{j}$ is injective, we say that the model is \grit{identifiable}.
\end{definition}


Some comments are in order.
First of all, we fix the codomain of $\mathit{j}$ to be an orbit of states $\mathcal{O}$ because, as will be clear below, we want to exploit the  differential geometric aspects of $\mathcal{O}$ itself.
In practice, a vast part of the models considered in the literature falls in this category.
For instance, in quantum information geometry, it is customary to deal with parametric models consisting only of pure states, or only of invertible density operators.
In principle, it would also be possible to consider a more general case in which $\mathit{j}$ is a smooth map of $M$ into the Banach space  $\stav$ of self-adjoint linear functionals in such a way that   $\mathit{j}(M)\subset\stsp$, and $M$ intersects different orbits of states.
This line of thought would require a different way to handle geometrical properties of the space of states in relation with the parameter manifold, based, for example, on the methodology introduced in \cite{A-J-L-S-2017,A-J-L-S-2018} for the classical case.
This line of reasoning may be useful in the transition to the infinite-dimensional case where the smooth structure of the orbits $\mathcal{O}$ is in general not guaranteed, and we plan to address this and related questions in the future.

Concerning the identifiability of a model, it may seem at first glance reasonable to consider only identifiable models, but we will show that there are well-known and ``simple'' parametric models of quantum states (e.g., qubit models) for which either this assumption is not satisfied, or it leads to difficulties with the statistical interpretation of the model.

\vsp

Now, we turn our attention to the geometrical objects that $M$ inherits by means of the smooth map $\mathit{j}$.
Indeed, once we have the smooth map $\mathit{j}$, a symmetric, covariant $(0,2)$ tensor is naturally obtained on $M$ by considering the Riemannian metric $\Gg$ on $\mathcal{O}$ introduced before  and   taking its pullback 
\be\label{eqn: pullback metric}
\Gg^{M}\,:=\mathit{j}^{*}\Gg
\ee
to $M$ with respect to  $\mathit{j}$.
This gives a tensor on $M$ which ``feels'' the possible non-commutativity of $\appa$ and gives the ``correct'' tensor in the classical case.

Indeed, if $\appa$ is Abelian, then $\mathcal{O}$ is diffeomorphic to the open interior of a suitable simplex, $\Gg$ is the Fisher-Rao metric tensor \cite{C-J-S-2020}, and $\Gg^{M}$  is the  pullback of the Fisher-Rao metric tensor to the manifold $M$ seen as a model of probability distributions  \cite{A-N-2000}.

On the other hand, if $\appa$ is the algebra $\bh$ of bounded linear operators on a finite-dimensional, complex Hilbert space $\hh$ and $\mathcal{O}$ is the manifold of pure states, then $\mathcal{O}$ is diffeomorphic to the complex projective space $\mathbb{CP}(\hh)$ associated with $\hh$, $\Gg$ is the Fubini-Study metric \cite{C-J-S-2020}  on $\mathcal{O}=\mathbb{CP}(\hh)$, and $\Gg^{M}$ is the quantum counterpart  of the Fisher-Rao metric tensor   on the manifold $M$ seen as a model of pure quantum states \cite{F-K-M-M-S-V-2010}.
Also, if $\mathcal{O}$ is the manifold of faithful states, then $\Gg$ is the
Bures-Helstrom metric tensor \cite{C-J-S-2020}, and $\Gg^{M}$ may be read as a
quantum counterpart of the Fisher-Rao metric tensor   on the manifold $M$ seen as a  model of faithful quantum states \cite{Paris-2009}.

\vsp

We will now introduce the $C^{\star}$-algebraic version of the Symmetric Logarithmic Derivative (SLD)   introduced in quantum estimation theory by Helstrom in \cite{Helstrom-1967}.
 For this purpose, note that  every tangent vector at $\rho\in\mathcal{O}$ may be expressed in terms of gradient vector fields, that is, given $\rho\in\mathcal{O}$,  for every tangent vector $V_{\rho}\in T_{\rho}\mathcal{O}$ there exists a self-adjoint element $\mathbf{a}\in\appas$ depending on $V_{\rho}$ such that
\be
V_{\rho}\,=\,\mathbb{Y}_{\mathbf{a}}(\rho)\,.
\ee
Consequently, if we consider a tangent vector $v_{m}\in T_{m}M$, it makes sense to ask for the gradient vector field $\mathbb{Y}_{\mathbf{a}}$ on $\mathcal{O}$ such that
\be\label{eqn: SLD}
T_{m}\mathit{j}(v_{m})\,=\,\mathbb{Y}_{\mathbf{a}}(\rho_{m})\,,
\ee
where $\rho_{m}:=\rho(j(m))$.
The gradient vector field $\mathbb{Y}_{\mathbf{a}}$ in general depends on both  the point $m\in M$ and the tangent vector $v_{m}$.
The tangent vector $\mathbb{Y}_{\mathbf{a}}(\rho_{m})$ satisfying equation \eqref{eqn: SLD} is called the SLD of $v_{m}$ at $\rho_{m}$.

To appreciate the link with the standard definition of the SLD, let us consider a parametric model $(M,\mathit{j},\mathcal{O})$ where  $\appa=\bh$, $\mathcal{O}$ is the manifold of faithful states (invertible density operators), $M$ is an open submanifold of $\mathbb{R}$, and $\mathit{j}$ is a suitable smooth map.
Setting $v_{m}=\partial_{t}(m)$ where $\partial_{t}$ is the restriction to $M$ of the vector field generating the group structure of $\mathbb{R}$,  a direct computation shows that the solution of equation \eqref{eqn: SLD} coincides with the Symmetric Logarithmic Derivative (SLD) of \cite{Helstrom-1967}.
Indeed, $\partial_{t}$ is the infinitesimal generator of $m_{t}=m + t$, and considering an arbitrary function $f_{\mathbf{b}}$ on $\mathcal{O}$, we have
\be
\langle \mathrm{d}f_{\mathbf{b}}(\rho_{m}),\,T_{m}\mathit{j}(v_{m})\rangle\,=\,\frac{\mathrm{d}}{\mathrm{d}t}\,\left(\mathrm{Tr}\left(\hat{\rho}_{m_{t}}\,\mathbf{b}\right)\right)_{t=0}\,=\,\mathrm{Tr}\left(\frac{\mathrm{d}}{\mathrm{d}t}\,\left(\hat{\rho}_{m_{t}}\right)_{t=0}\,\mathbf{b}\right)\;\;\;\forall\,\mathbf{b}\in\appas
\ee
so that equation \eqref{eqn: SLD} may be alternatively written as
\be
\frac{\mathrm{d}}{\mathrm{d}t}\,\left(\hat{\rho}_{m_{t}}\right)_{t=0}\,=\,\{\hat{\rho}_{m},\,\mathbf{a}_{m}\}\,=\,\frac{1}{2}\left(\hat{\rho}_{m}\,\mathbf{a}_{m} + \mathbf{a}_{m}\,\hat{\rho}_{m}\right),
\ee
where 
\be
\mathbf{a}_{m}\,=\,\mathbf{a} - \hat{\rho}_{m}(\mathbf{a})\,\mathbb{I},
\ee
which is precisely the definition of the SLD (see also equation 3 in \cite{Paris-2009} and equations 3.4 and 3.14 in \cite{Tsang-2019} for the multiparametric case).
This justifies the interpretation of equation \eqref{eqn: SLD} as the $C^{\star}$-algebraic generalization of the SLD embracing also the multiparametric quantum and classical cases.

\begin{example}[A pure state qubit model]\label{ex: the circle in the space of pure states of a qubit}

Consider the algebra $\mappa_{2}$ of the qubit (see example \ref{ex: the space of density operators}).
Take the one-parameter group of unitary elements generated by the element $\imath\sigma_{3}$ according to 
\be\label{eqn: gamma qubit}
\mathbf{u}_{\gamma}\,=\,\mathrm{e}^{\frac{\imath}{2}\gamma\sigma^{3}},
\ee
where $\gamma\in\mathbb{R}$.
Then, consider the orbit $\mathcal{O}\cong\mathbb{CP}^{2}$ of pure states on $\mappa_{2}$, set $M=\mathbb{R}$, and consider the   map $\mathit{j}_{\mathbb{R}}\colon M\ra\mathcal{O}$ given by
\be\label{eqn: gamma qubit 2}
\rho_{\gamma}\,\equiv\,\mathit{j}_{\mathbb{R}}(\gamma)\,:=\,\Phi(\mathbf{u}_{\gamma},\rho),
\ee
where $\Phi$ is the action of $\gapp\supset\mathscr{U}$ given in equation \eqref{eqn: action of G on states}, and 
\be\label{eqn: pure state qubit model R 1}
\rho\,=\,\frac{1}{2}\left(\sigma_{0} + \sigma_{1}\right).
\ee
A direct computation shows that 
\be\label{eqn: pure state qubit model R 2}
\rho_{\gamma}\,=\,\frac{1}{2}\left(\sigma_{0} + \cos(\gamma)\,\sigma_{1} - \sin(\gamma)\,\sigma_{2} \right)
\ee
and that $\mathit{j}_{\mathbb{R}}$ is smooth.
Clearly, $\mathit{j}_{\mathbb{R}}$ is not injective, and thus the parametric model  $(\mathbb{R},\mathit{j}_{\mathbb{R}},\mathbb{CP}^{2})$ is not identifiable.
However, the parametrization given in equation \eqref{eqn: pure state qubit model R 2} is useful in quantum estimation theory when an experimental realization of the parametric model is constructed in terms of a spin interacting with a magnetic field.
In this case,  $\gamma=tB$ where $t$ is the time parameter of the dynamical evolution and $B$ is the strenght of the magnetic field.
Then, the fact that the model is not identifiable depends on the dynamical evolution being periodic.

Now, let us  consider the vector field $V$ on $M=\mathbb{R}$ generating translations.
This vector field is complete, and provide a basis of the tangent space $T_{\gamma}M$ at each $\gamma\in M$.
Moreover, $V$ is the infinitesimal generator of the action of the Abelian Lie group $G=\mathbb{R}$ on $M=\mathbb{R}$ given by
\be
\psi(\zeta,\gamma)\,:=\,\gamma + \zeta \qquad \forall\,\zeta\in G,\,\gamma\in M.
\ee
The group $G$ acts also on $\mathbb{CP}^{2}$ by means of
\be
\Psi(\zeta,\rho)\,:=\,\Phi(\mathbf{U}_{\zeta},\rho),
\ee
where $\Phi$ is the action given in equation \eqref{eqn: action of G on states}.
The fact that $\Psi$ is a group action follows from the fact that the map $\gamma\mapsto \mathbf{U}_{\gamma}$ is   a group homomorphism, that is, it satisfies
\be
\mathbf{U}_{\zeta_{1}}\,\mathbf{U}_{\zeta_{2}}\,=\,\mathbf{U}_{\zeta_{1} + \zeta_{2}}\,\qquad \forall\,\zeta_{1},\zeta_{2}\in G.
\ee
The actions $\psi$ and $\Psi$ have a particular relation to one another, indeed, a direct computation shows that they are equivariant with respect to $\mathit{j}_{\mathbb{R}}$, which means that
\be\label{eqn: relatedness of actions on pure state qubit model}
\mathit{j}_{\mathbb{R}}\left(\psi(\zeta,\gamma)\right)\,=\,\Psi\left(\zeta,\mathit{j}_{\mathbb{R}}(\gamma)\right)\,.
\ee
This property is quite strong because it implies that the fundamental vector fields of the action of $G$ on $M=\mathbb{R}$ are $\mathit{j}_{\mathbb{R}}$-related with the fundamental vector fields of the action of $G$ on $\mathbb{CP}^{2}$, which means that \cite{A-M-R-1988}
\be
T_{\gamma}\mathit{j}_{\mathbb{R}}(V_{\gamma})\,=\,W_{\rho_{\gamma}}^{\zeta},
\ee
where $V$ is the fundamental vector field of $\psi(\zeta,\gamma)$ (i.e, the vector field generating the translation  considered above), while $W$ is the fundamental vector field of $\Psi(\zeta,\rho)$ (recall that, in this case, the exponential map from the Lie algebra of $G$ to $G$ itself is the identity).
Since
\be
\Psi(\zeta,\rho)\,=\,\Phi(\mathbf{U}_{\zeta},\rho),
\ee
the fundamental vector field $W$ is easily seen to be the Hamiltonian vector field associated with $\sigma^{3}$ (see equation \eqref{eqn: hamiltonian and gradient vector fields}).
This means that
\be
\langle \mathrm{d}f_{\mathbf{b}},\,T_{\gamma}\mathit{j}_{\mathbb{R}}(V_{\gamma})\rangle\,=\, \langle \mathrm{d}f_{\mathbf{b}},\,W_{\rho_{\gamma}}\rangle\,=\, \rho_{\gamma}\left([[\sigma^{3},\,\mathbf{b}]]\right)\,.
\ee
Consequently, regarding the SLD, equation \eqref{eqn: SLD} leads us to look for the self-adjoint element $\mathbf{a}$ satisfying
\be\label{eqn: SLD qubit}
\rho_{\gamma}\left(\{\mathbf{a},\,\mathbf{b}\}\right) - \rho_{\gamma}(\mathbf{a})\,\rho_{\gamma}(\mathbf{b})\,=\, \rho_{\gamma}\left([[\sigma^{3},\,\mathbf{b}]]\right)\,
\ee
for all self-adjoint elements $\mathbf{b}\in\mappa_{2}$.
Passing from $\rho_{\gamma}$ to its density matrix $\hat{\rho}_{\gamma}$, we see that equation \eqref{eqn: SLD qubit} is equivalent to
\be\label{eqn: SLD qubit 2}
\{\hat{\rho}_{\gamma},\,\mathbf{a}\} - \Tr(\hat{\rho}_{\gamma}\mathbf{a})\,\hat{\rho} \,=\,[[\hat{\rho}_{\gamma},\,\sigma^{3}]].
\ee
We write
\be
\mathbf{a}\,=\,a_{0}\sigma^{0}  + a_{1}\sigma^{1} + a_{2}\sigma^{2} + a_{3}\sigma^{3},
\ee
where $a_{j}\in\mathbb{R}$ for all $j=0,1,2,3$.
A direct computation exploiting the properties of the Pauli matrices shows that  $a_{0}$ is arbitrary (as it should be because of the very definition of gradient vector field), $a_{3}=0$, while $a_{1}$ and $a_{2}$ must satisfy
\be
a_{1}\sin(\gamma) + a_{2}\cos(\gamma)\,=\,-1 .
\ee
Clearly, this means that $\mathbf{a}$ and thus the SLD are not uniquely defined.

Concerning the covariant tensor $\Gg^{\mathbb{R}}$, we have
\be
\Gg^{\mathbb{R}}\,=\,\mathit{j}_{\mathbb{R}}^{*}\Gg
\ee
by definition.
Since $\mathit{j}_{\mathbb{R}}$ is an immersion and $\Gg$ is a Riemannian metric, then $\Gg^{\mathbb{R}}$ is a Riemannian metric (i.e., it is positive and invertible).
Moreover, setting
\be
\begin{split}
\psi_{\zeta}(\gamma)&\,:=\,\psi(\zeta,\gamma) \\
\Psi_{\zeta}(\rho)&\,:=\,\Psi(\zeta,\rho)\,=\,\Phi(\mathbf{U}_{\zeta},\rho),
\end{split}
\ee
we immediately obtain
\be
\psi_{\zeta}^{*}\Gg^{\mathbb{R}}\,=\,\psi_{\zeta}^{*}\mathit{j}_{\mathbb{R}}^{*}\Gg\,=\,\left(\mathit{j}_{\mathbb{R}}\circ\psi_{\zeta} \right)^{*}\Gg\,=\,\left(\Psi_{\zeta}\circ\mathit{j}_{\mathbb{R}}\right)^{*}\Gg\,=\,\mathit{j}_{\mathbb{R}}^{*}\,\Psi_{\zeta}^{*}\Gg\,=\,\mathit{j}_{\mathbb{R}}^{*}\,\Phi_{\mathbf{U}_{\zeta}}^{*}\Gg\,=\,\mathit{j}_{\mathbb{R}}\Gg\,=\,\Gg^{\mathbb{R}}
\ee
where we used equation \eqref{eqn: relatedness of actions on pure state qubit model} in the fourth equality, and equation \eqref{eqn: unitary invariance of G} in the sixth equality.
Therefore, we conclude that $\Gg^{\mathbb{R}}$ is invariant with respect to the action of the Lie group $G=\mathbb{R}$ on $M=\mathbb{R}$ given by translation, and thus must be proportional to the Euclidean metric tensor.
\end{example}
 
\begin{example}[A mixed state qubit model]\label{ex: a mixed state qubit model} 

Consider the algebra $\mappa_{2}$ of the qubit (see example \ref{ex: the space of density operators}).
Consider the orbit $\mathcal{O}$ of faithful states, set $M=\mathbb{R}^{+}\times\mathbb{R}^{+}$, and define the map $\mathit{j}_{M}$ as
\be\label{eqn: 2-dim qubit mixed model}
\rho_{\gamma,\zeta}\,\equiv\,\mathit{j}_{M}(\gamma,\zeta)\,:=\,\frac{1}{2}\,\left(\sigma_{0} +  \mathrm{e}^{-\zeta\,\gamma} \left(\cos(\gamma)\,\sigma_{1} - \sin(\gamma)\,\sigma_{2} \right)\right).
\ee
A direct computation shows that this map is smooth.
Quite interestingly, the parametric model $(M,\mathit{j}_{M},\mathcal{O})$ has a physical origin which is connected with the dynamics of open quantum systems.
The dynamics of such systems is governed by the so-called Gorini-Kossakowski-Lindblad-Sudarshan (GKLS) equation \cite{AAVV1-2006,AAVV2-2006,AAVV3-2006,C-P-2017,G-K-S-1976,Lindblad-1976}.
In particular, choosing the infinitesimal generator $L$ of this linear equation  to be  the dephasing channel, the dynamical evolution evolution generated by $L$ is such that the initial (pure) state $\rho$ given in equation \eqref{eqn: pure state qubit model R 1} evolves according to the right-hand-side of equation \eqref{eqn: 2-dim qubit mixed model}, where $\frac{\gamma}{2}$ plays the role of  the time parameter while  $2\zeta$ is the dephasing parameter  \cite[ex. 2]{C-DC-I-L-M-2017}.
Note that the initial pure state is evolved into a mixed (faithful) state as soon as the time parameter is greater than 0.

This model has been recently considered in the context of quantum parameter estimation in the presence of nuisance parameters \cite{S-Y-H-2020}.

Let us now consider the vector fields $V$ and $W$ on $M$ generating the local one-parameter groups of local diffeomorphisms
\be
\begin{split}
\phi_{t}(\gamma,\zeta)&\,=\,(\gamma + t ,\zeta) \\
\psi_{t}(\gamma,\zeta)&\,=\,(\gamma ,\zeta + t) .
\end{split}
\ee
Clearly, these vector fields are not complete on $M$, however, they provide a basis of tangent vectors at each point of $M$.
A direct computation shows that
\be
\begin{split}
\langle \mathrm{d}f_{\mathbf{b}},\,T_{\gamma,\zeta}\mathit{j}_{M}(V_{\gamma,\zeta})\rangle &\,=\,-  \mathrm{e}^{-\zeta\gamma} \left(\left(\sin(\gamma)+  \zeta\cos(\gamma)\right)\,b_{1} +  \left(\cos(\gamma) - \zeta\sin(\gamma)\right)\,b_{2}\right)  \\
\langle \mathrm{d}f_{\mathbf{b}},\,T_{\gamma,\zeta}\mathit{j}_{M}(W_{\gamma,\zeta})\rangle \,&\,=\,- \gamma\mathrm{e}^{-\zeta\gamma} \left( \cos(\gamma)\,b_{1} -  \sin(\gamma)\,b_{2}\right),
\end{split}
\ee
from which we conclude that $\mathit{j}_{M}$ is an immersion.
Then, equation \eqref{eqn: SLD} implies that the SLD $\mathbb{Y}_{\mathbf{a}^{V}}(\rho_{\gamma,\zeta})$ and $\mathbb{Y}_{\mathbf{a}^{W}}(\rho_{\gamma,\zeta})$ of $V$ and $W$ at $(\gamma,\zeta)$, respectively, are found as the solutions of
\be
\begin{split}
\langle \mathrm{d}f_{\mathbf{b}},\,T_{\gamma,\zeta}\mathit{j}_{M}(V_{\gamma,\zeta})\rangle &=\langle \mathrm{d}f_{\mathbf{b}},\,\mathbb{Y}_{\mathbf{a}^{V}}(\rho_{\gamma,\zeta})\rangle  = \rho_{\gamma,\zeta}\left(\{\mathbf{a}^{V},\mathbf{b}\}\right) - \rho_{\gamma,\zeta}\left(\mathbf{a}^{V} \right)\,\rho_{\gamma,\zeta}\left(\mathbf{b}\right)\\
\langle \mathrm{d}f_{\mathbf{b}},\,T_{\gamma,\zeta}\mathit{j}_{M}(E_{\gamma,\zeta})\rangle &=\langle \mathrm{d}f_{\mathbf{b}},\,\mathbb{Y}_{\mathbf{a}^{W} }(\rho_{\gamma,\zeta})\rangle = \rho_{\gamma,\zeta}\left(\{\mathbf{a}^{W} ,\mathbf{b}\}\right) - \rho_{\gamma,\zeta}\left(\mathbf{a}^{W} \right)\,\rho_{\gamma,\zeta}\left(\mathbf{b}\right)
\end{split}
\ee
for all self-adjoint elements $\mathbf{b}\in\mappa_{2}$.
A direct computation leads to
\be
\begin{split}
\mathbf{a}^{V} &\,=\, a_{0}^{V} \sigma^{0}-\left(\mathrm{e}^{-\zeta\gamma}\sin(\gamma) + \frac{\zeta\cos(\gamma)}{2\sinh(\zeta\gamma)}\right)\sigma^{1} +  \left( \frac{\zeta \sin(\gamma)}{2\sinh(\zeta\gamma)} - \mathrm{e}^{-\zeta\gamma}\cos(\gamma) \right)\sigma^{2} 
 \\
\mathbf{a}^{W} &\,=\, a_{0}^{W} \sigma^{0}-\frac{\gamma}{2\sinh(\zeta\gamma)}\left(\cos(\gamma)\sigma^{1} - \sin(\gamma)\sigma^{2}\right).
\end{split}
\ee
Note that, apart from the coefficients $ a_{0}^{V} $ and $ a_{0}^{W} $ which
are arbitrary because they do not affect the expression of the associated
gradient vector field, the SLD associated with $V$ and $W$ are   uniquely  defined at each point of $M$.
This is due to the fact that the model is a model of faithful states.
Also, note that $[\mathbf{a}^{V},\mathbf{a}^{W}]\neq\mathbf{0}$, and thus there is no unital, Abelian $C^{*}$-subalgebra of $\mappa_{2}$ that contains both $\mathbf{a}^{V}$ and $\mathbf{a}^{W}$.
This will have an impact on the attainability of the Helstrom bound.

Since $\Gg^{M}=\mathit{j}_{M}^{*}\Gg$, we immediately obtain (see equation \eqref{eqn: Jordan metric tensor})
\be
\Gg^{M}_{\gamma,\zeta}\left(V_{\gamma,\zeta},V_{\gamma\zeta}\right)=\Gg_{\rho_{\gamma,\zeta}}\left(\mathbb{Y}_{\mathbf{a}^{V}}(\rho_{\gamma,\zeta}),\mathbb{Y}_{\mathbf{a}^{V}}(\rho_{\gamma,\zeta})\right)=\rho_{\gamma,\zeta}\left(\{ \mathbf{a}^{V},\mathbf{a}^{V}\}\right) - \left(\rho_{\gamma,\zeta}\left( \mathbf{a}^{V}\right)\right)^{2},
\ee
and similarly for $\Gg^{M}_{\gamma,\zeta}\left(V_{\gamma,\zeta},W_{\gamma\zeta}\right)$ and $\Gg^{M}_{\gamma,\zeta}\left(W_{\gamma,\zeta},W_{\gamma\zeta}\right)$.
Then, since $V$ and $W$ provide a basis of tangent vectors at each point in $M$, the tensor $\Gg^{M}$ can be computed to be
\be
\Gg^{M}\,=\,\left(\mathrm{e}^{-2\zeta\gamma}  + \frac{\zeta^{2}}{\mathrm{e}^{2\zeta\gamma} -1}\right)\mathrm{d}\gamma\otimes\mathrm{d}\gamma + \left(\frac{\zeta\gamma}{\mathrm{e}^{2\zeta\gamma} -1}\right)\mathrm{d}\gamma\otimes_{S}\mathrm{d}\zeta + \left(\frac{\gamma^{2}}{\mathrm{e}^{2\zeta\gamma} -1} \right)\mathrm{d}\zeta\otimes\mathrm{d}\zeta .
\ee
\end{example}

\begin{example}[Lie group and Lie algebra parametric models] 

Motivated by  the model in example \ref{ex: the circle in the space of pure
  states of a qubit}, and by some of the models commonly used in the quantum
context \cite{BN-G-J-2003,Suzuki-2019}, we introduce the notion of a \grit{Lie
  group parametric model} and of a \grit{Lie algebra parametric model}.

Let $G$ be a  Lie group which is realized as a Lie subgroup of the Lie group $\gapp$ of invertible elements in $\appa$, and let $\rho_{0}$ be a state in $\stsp$.
Set $M=G$ and define the   map $\mathit{j}_{G}\colon M \ra \mathcal{O}$, where $\mathcal{O}$ is the orbit containing $\rho_{0}$, by means of
\be\label{eqn Lie group parametric models}
\mathit{j}_{G}(\gr)\,:=\,\Phi(\gr,\rho_{0}).
\ee 
This map is clearly smooth, and we call $(G,\mathit{j}_{G},\mathcal{O})$ a \grit{Lie group parametric model}.
If the fiducial state $\rho_{0}$ is such that
\be\label{eqn: trivial isotropy group for fiducial state of group-adapted parametric models}
\Phi(\gr,\rho_{0})\,=\,\rho_{0} \;\;\Longleftrightarrow\;\;\gr=\mathbb{I}\qquad \forall \gr\in G\,,
\ee
then the model is identifiable.

Since $G$ is a subgroup of $\gapp$, the  left action of $G$ on itself is related with the action of $G$ on $\mathcal{O}$ determined by the restriction of $\Phi$ to $G$ in the way expressed in equation \eqref{eqn: relatedness of actions on pure state qubit model}.
Specifically, let $\psi$ be the  left action of $G$ on itself.
Define an action $\Psi$ of $G$ on $\mathcal{O}$ given by
\be\label{eqn Lie group parametric models 2}
\Psi(g,\rho)\,:=\,\Phi(\gr(g),\rho),
\ee
where $g\in G$ and $\gr(g)\in \gapp$ is the realization of $g$ as an element of $\gapp$.
Then, $g\mapsto\gr(g)$ is a group homomorphism, that is, it satisfies
\be\label{eqn Lie group parametric models 3}
\gr(g_{1}g_{2})\,=\,\gr(g_{1})\,\gr(g_{2}),
\ee
and thus it follows from equations \eqref{eqn Lie group parametric models}, \eqref{eqn Lie group parametric models 2}, and \eqref{eqn Lie group parametric models 3} that
\be
\mathit{j}_{G}\left(\psi(g,h)\right)\,=\,\Psi\left(\gr(g),\mathit{j}_{G}(h)\right),
\ee
which means that the actions $\psi$ and $\Psi$ are equivariant with respect to $\mathit{j}_{G}$.
This means that the fundamental  vector fields of $\psi$ are $\mathit{j}_{G}$-related with the fundamental vector fields of $\Psi$ \cite{A-M-R-1988}.
This instance may be helpful in computing the SLD adapting the steps outlined in example \ref{ex: the circle in the space of pure states of a qubit}.

\vsp

If $(G,\mathit{j}_{G},\mathcal{O})$ is a Lie group parametric model and we consider another parameter manifold which is a smooth homogeneous space $M=G/H$ of $G$ admitting a global, smooth section $\eta\colon M\ra G$, then we can immediately build another parametric model $(M,\mathit{j}_{M},\mathcal{O})$ by setting $\mathit{j}_{M}:=\mathit{j}_{G}\circ\eta$.
This may be helpful to obtain identifiable models.
Indeed, if $\rho_{0}$ has a non-trivial isotropy group $G_{0}\subset G$, which is the set of all elements $g\in G$ such that $\Phi(g,\rho_{0})=\rho_{0}$, we have that $M=G/G_{0}$ is a smooth manifold.
Then, if there is a smooth section $\eta$ for $M$, the resulting parametric model will be identifiable.
This is very similar to the notion of coherent state used in quantum theory \cite{A-A-G-1999,Perelomov-1986}.

\vsp

Another relevant parametric models is obtained when we consider the Lie algebra $\mathfrak{g}$ of $G$.
In this case, we have the exponential map $\exp\colon \mathfrak{g}\ra G$ that can be exploited to define a parametric model.
Specifically, let $(G,\mathit{j},\mathcal{O})$ be a Lie group parametric model.
Then, defining $\mathit{j}_{\mathfrak{g}}:=\mathit{j}_{G}\circ \exp$, we immediately obtain the parametric model $(\mathfrak{g},\mathit{j}_{\mathfrak{g}},\mathcal{O})$ which is referred to as a \grit{Lie algebra parametric model}.
If the Lie algebra $\mathfrak{g}$ is commutative, then  the exponential map is a group homomorphism when the Lie algebra is thought of as a group with respect to the vector sum, and we obtain an equivariance relation with respect to $\mathit{j}_{\mathfrak{g}}$ between the left action $\psi$ of $\mathfrak{g}$ on itself and its realization $\Psi(\mathbf{v},\rho) =\Phi(\exp(\mathbf{v}),\rho)$ as a group acting on $\mathcal{O}$.
\end{example}

\section{Parametric statistical models of states on $C^{\star}$-algebras}\label{sec: statistical models}

When an experiment is performed on a system in a given state $\rho$, we obtain an outcome lying in a given outcome space  $\mathcal{X}$ which is associated with the measurement procedure.
The state $\rho$ is then ``transformed'' into a probability distribution on $\mathcal{X}$ in the sense that different repetitions of the same experimental procedure (i.e., preparation of the system in the state $\rho$ followed by the  measurement procedure with outcome space $\mathcal{X}$) will produce in general different outcomes characterized by a probability distribution which is associated with the state $\rho$ and with the measurement procedure adopted.
In this work, we will always consider outcome spaces which are discrete and finite.

Given a discrete and finite outcome space $\mathcal{X}_{n}$ with $n$ elements, the statistical interpretation of the state $\rho$ is encoded in a map $\mathfrak{m}^{\star}\colon \stsp\lra \mathrm{P}(\mathcal{X}_{n})\equiv\Delta_{n}$, which we will assume to be convex in order to preserve one of the basic features of probabilities and states.
From this, it follows that $\mathfrak{m}^{\star}$ can be  extended to a linear map $\mathfrak{m}^{\star}\colon\appa^{\star}\lra \mathrm{S}(\mathcal{X}_{n})$, where $\mathrm{S}(\mathcal{X})$ is the vector space of signed measures on $\mathcal{X}_{n}$.
From the $C^{\star}$-algebraic perspective, $\mathrm{S}(\mathcal{X}_{n})$ is the space of self-adjoint linear functionals on the Abelian $C^{\star}$-algebra $\cappa_{n}:=C(\mathcal{X}_{n})$ of complex-valued, continuous functions on $\mathcal{X}_{n}$, and thus, since $\mathfrak{m}^{\star}$ is continuous because $\appa^{\star}$ and $\mathrm{S}(\mathcal{X}_{n})$ are finite-dimensional, we immediately obtain that there is a continuous linear map $\mathfrak{m} \colon \cappa_{n}\lra\appa$ of which $\mathfrak{m}^{\star}$ is the dual map.
By construction, the  map $\mathfrak{m}$ must  be such that its dual map $\mathfrak{m}^{\star}$ sends the space of states of $\appa$  into the space of states of $\cappa_{n}$.
One way to implement this condition is to require $\mathfrak{m} \colon\cappa_{n} \lra\appa$ to  be a unital, positive map between $C^{\star}$-algebras, that is, a linear map preserving the identity and sending positive elements into positive elements (clearly, any such map sends self-adjoint elements into self-adjoint elements).

\begin{definition}\label{def: measurement procedure}
A positive unital map $\mathfrak{m}\colon\cappa_{n}\ra\appa$  is defined to be a \grit{measurement procedure}.
\end{definition}

Specifically, given a finite and discrete outcome space $\mathcal{X}_{n}$, we can always consider the basis of $\cappa_{n}$ given by the elements $\{\mathbf{e}^{j}\}_{j=1,...,n}$ where  $\mathbf{e}^{j}$ is the ``delta function'' at the $j$-th element of $\mathcal{X}_{n}$.
The measurement procedure $\mathfrak{m}$ amounts to define the elements
\be
\mathbf{m}^{j}\,:=\,\mathfrak{m}(\mathbf{e}^{j})\qquad \forall\,j=1,...,n ,
\ee
in such a way that they satisfy 
\be\label{eqn: unitality of measurement procedure}
\sum_{j=1}^{n}\,\mathbf{m}^{j}\,=\,\mathbb{I},
\ee
and 
\be\label{eqn: positivity of measurement procedure}
\mathbf{m}^{j}\,\geq\,\mathbf{0}\qquad \forall\,j=1,...,n .
\ee
Essentially, we are considering a (discrete) POVM in the $C^{\star}$-algebraic framework.
The probability distribution $\mathfrak{m}^{\star}(\rho)$ associated with the state $\rho$ is characterized by the numbers
\be\label{eqn: probability distribution associated with a state by a measurement procedure}
p^{j}\,:=\,\left(\mathfrak{m}^{\star}(\rho)\right)(\mathbf{e}^{j})\,=\,\rho\left(\mathfrak{m}(\mathbf{e}^{j})\right)\,=\,\rho(\mathbf{m}^{j})\,.
\ee

Once a parametric model $(M,\mathit{j},\mathcal{O})$ is chosen, we immediately have the map
\be\label{eqn: classical immersion}
\mathit{j}^{c}\,:=\,\mathfrak{m}^{\star}\circ \mathit{j}\colon M\,\lra\, \Delta_{n}\,.
\ee
We require this map to lie entirely in a given fixed orbit of states inside $\Delta_{n}$.
Clearly, since every orbit in $\Delta_{n}$ is diffeomorphic to $\Delta_{k}^{+}$ for some $k\neq n$ (see example \ref{ex: the probability simplex}), there is no loss of generality in requiring the codomain of $\mathit{j}^{c}$ to lie entirely inside the manifold $\Delta_{n}^{+}$ of faithful states on $\cappa_{n}$.
Indeed, if this is not the case, it suffices to redefine $\mathcal{X}_{n}$ to be the subset $I_{k}$,  exchange $\cappa_{n}$ with $C(I_{k})$, and relabel $k$ as $n$.

\begin{definition}\label{def: regular measurement procedure}
Let $(M,\mathit{j},\mathcal{O})$ be a parametric model of states on a $C^{\star}$-algebra $\appa$.
A  \grit{measurement procedure} $\mathfrak{m}$ such that $\mathit{j}^{c}(M)\,:=\,\mathfrak{m}^{\star}\circ \mathit{j}(M)\subseteq\Delta_{n}^{+}$ is called \grit{regular} for $(M,\mathit{j},\mathcal{O})$.
\end{definition}

Once a regular measurement procedure $\mathfrak{m}$ for $(M,\mathit{j},\mathcal{O})$ is chosen, we are ready to build a parametric statistical model (in the sense of information geometry \cite{AAVV1-1987,A-N-2000,A-J-L-S-2017}) associated with the parametric model $(M,\mathit{j},\mathcal{O})$.

\begin{definition}\label{def: parametric statistical model}
Let $(M,\mathit{j},\mathcal{O})$ be a parametric model of states on a $C^{\star}$-algebra $\appa$, and let $\mathfrak{m}$ be a regular measurement procedure for $(M,\mathit{j},\mathcal{O})$.
Then, the triple $(M,\mathit{j}^{c},\Delta_{n}^{+})$, with $\mathit{j}^{c}$ as in equation \eqref{eqn: classical immersion}, is defined to be the \grit{parametric statistical model} associated with the parametric model $(M,\mathit{j},\mathcal{O})$ by means of the  measurement procedure $\mathfrak{m}$.
\end{definition}

 
 The  open interior of the simplex $\Delta_{n}^{+}$ coincides with the space of faithful states on the finite-dimensional, commutative $C^{\star}$-algebra $\cappa_{n}$ of complex-valued continuous functions on the discrete n-point space $\mathcal{X}{n}$, and thus  the Radon-Nikodym derivative of  $\mathbf{p}\in\Delta_{n}^{+}$ with respect to the counting measure on $\mathcal{X}_{n}$ is well-defined as a function on $\mathcal{X}_{n}$ and it is called the probability density function of $\mathbf{p}$.
Clearly, being $\mathit{j}^{c}(M)\subseteq\Delta_{n}^{+}$, every element $m\in M$ may be uniquely associated with the probability density function of $\mathbf{p}_{m}=\mathit{j}^{c}(m)$.
Moreover, for every $x\in\mathcal{X}_{n}$, the function $p(m,x)=\mathbf{p}_{m}(\{x\})$ is a smooth function on $M$ because  $\mathbf{p}_{m}$ is a linear functional on $\cappa_{n}$ and $\mathit{j}^{c}$ is smooth, and its support does not depend on the chosen $x\in\mathcal{X}_{n}$ because $\mathit{j}^{c}(M)\subseteq\Delta_{n}^{+}$.
These regularity properties are particularly meaningful with respect to the Cramer-Rao bound discussed in section \ref{sec: cramer-rao bound for manifold-valued estimators}. 


\begin{remark}[Classical statistical models]\label{ex: classical statistical models}

In the specific case when the algebra $\appa$ is commutative, i.e., $\appa=\cappa_{n}$ for some $n\in\mathbb{N}$, a \grit{parametric model} $(M,\mathit{j},\mathcal{O})$  of states on $\cappa_{n}$ is already a \grit{parametric statistical model} by itself.
Indeed, according to example \ref{ex: the probability simplex}, the orbit $\mathcal{O}$ is diffeomorphic to the open interior $\Delta_{k}^{+}$ of a  k-simplex with $k\neq n$.
Specifically, we have a subset $I_{k}\subseteq\mathcal{X}_{n}$ of $k$ elements,  the $C^{\star}$-algebra $\cappa_{k}$ generated by the elements $\mathbf{e}^{j}\in\cappa_{n}$ with $j$ such that $\mathbf{x}_{j}\in I_{k}$, and $\mathcal{O}$ is diffeomorphic to the orbit of faithful states of $\cappa_{k}$.
Then, we have a ``natural'' measurement procedure $\mathfrak{m}\colon\cappa_{k}\ra\cappa_{n}$ at our disposal given by the natural identification $\mathrm{i}_{k}$ map of $\cappa_{k}$ in $\cappa_{n}$,  and the map $\mathit{j}^{c}=\mathfrak{m}^{\star}\circ \mathit{j}=\mathrm{i}_{k}^{*}\circ\mathrm{j}$ gives rise to the statistical model $(M,\mathit{j}^{c},\Delta_{k}^{+})$ associated with $(M,\mathit{j},\mathcal{O})$.
From this, it is clear that once we have the parametric model $(M,\mathit{j},\mathcal{O})$ we immediately have a ``natural'' parametric statistical model $(M,\mathit{j}^{c},\Delta_{k}^{+})$ associated with it.
No additional choices must be made. 

\end{remark}
 

Exploiting the Riemannian geometry of $\Delta_{n}^{+}$, the parameter manifold $M$ may be endowed with another symmetric, covariant $(0,2)$ tensor which is in general different from the metric $\Gg^{M}$ introduced before.
Indeed, we may consider the Fisher-Rao Riemannian metric $\Gg_{FR}$ on $\Delta_{n}^{+}$, which is the Riemannian metric tensor $\Gg$ associated with the Jordan product of the self-adjoint part of $\cappa_{n}$ as described in section \ref{sec: differential geometry of states}, and then take its pullback 
\be\label{eqn: classical pullback metric}
\Gg^{Mc}=(\mathit{j}^{c})^{*}\Gg_{FR}
\ee
to $M$ (the `c' stands for classical, or commutative).
In this case, we obtain a symmetric covariant tensor on $M$ which, unlike   $\Gg^{M}$ given by equation \eqref{eqn: pullback metric}, can not feel the possible non-commutativity of $\appa$, and which is  the pullback of the Fisher-Rao metric tensor on $M$ thought of as a parametric statistical model in $\Delta_{n}^{+}$ along the lines of classical information geometry.

\vsp

To accomodate multiple runs, say $N$, of the same experimental procedure  on  $N$ identical and independent copies of the initial state, we introduce the parametric model $(M,\mathit{j}^{N},\mathcal{O}^{N})$ where $\mathcal{O}^{N}$ is the manifold of states on the tensor product algebra 
\be
\appa^{\otimes N}:=\appa\otimes\cdots\otimes\appa
\ee
containing the product states of the form $\rho_{1}\otimes\cdots\otimes\rho_{N}$ with $\rho_{j}\in\mathcal{O}$ for every $j=1,...,N$, and $\mathit{j}^{N}\colon M\lra\mathcal{O}^{N}$ is given by
\be
\mathit{j}^{N}(m)\,:=\,\mathit{j}(m)\,\otimes\,\cdots\,\otimes\,\mathit{j}(m)\,\equiv\,\rho_{m}\,\otimes\,\cdots\,\otimes\,\rho_{m}\,\equiv\,\rho_{m}^{\otimes N}.
\ee
Clearly, we may endow $M$ with the Riemannian metric $\Gg^{MN}$ defined by
\be
\Gg^{MN}\,:=\,(\mathit{j}^{N})^{*}\Gg^{N},
\ee
where $\Gg^{N}$ denotes the canonical Riemannian metric on $\mathcal{O}^{N}$ associated with the Jordan product on $\appa^{ \otimes N}$.
Since the smooth embedding $\mathit{j}^{N}$  has been defined  in terms of a ``multiplicative object'', namely, the tensor product, it is reasonable to expect that this multiplicative feature reflects also in the pullback metric.
Indeed, below we will prove   that
\be
\Gg^{MN}\,=\,N\Gg^{M}.
\ee

Performing $N$ runs of an experiment  provides us with a list of $N$ outcomes, and  we consider the outcome space 
\be
\mathcal{X}^{N}\,=\,\mathcal{X}\,\times\,\cdots\,\times\,\mathcal{X} .
\ee
At this point, we must choose a  measurement procedure   $\mathfrak{m}^{N}\colon \cappa_{n}^{ \otimes N}=C(\mathcal{X}^{N})\lra\appa^{\otimes N}$ so that, setting $\mathit{j}^{cN}=\mathfrak{m}^{N}\circ\mathit{j}^{N}$,  we can build a statistical   model $(M,\mathit{j}^{cN},\Delta_{Nn}^{+})$ in the obvious way.
We may  endow $M$ with the Riemannian metric $\Gg^{McN}$ defined by
\be
\Gg^{McN}\,:=\,N(\mathit{j}^{cN})^{*}\Gg_{FR},
\ee
where $N\Gg_{FR}$ is the Fisher-Rao metric tensor on $\Delta_{nN}^{+}$ (this either follows from standard arguments in classical information geometry, or by proposition \ref{prop: Jordan metric for N rounds is N times the Jordan metric}  below applied to the case where $\appa=\cappa_{n}$).

\begin{proposition}\label{prop: Jordan metric for N rounds is N times the Jordan metric}

With the notations introduced above, we have 
\be
\Gg^{MN}\,=\,N\Gg^{M}.
\ee

\end{proposition}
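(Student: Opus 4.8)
The plan is to reduce the claim to a pointwise statement about the Jordan metric $\Gg^{N}$ on $\mathcal{O}^{N}$ and then exploit the explicit formula \eqref{eqn: Jordan metric tensor} for $\Gg$ together with the way gradient vector fields behave under tensor products. First I would fix $m\in M$, write $\rho=\rho_{m}=\mathit{j}(m)$, and pick tangent vectors $v_{m},w_{m}\in T_{m}M$. Their images under $T_{m}\mathit{j}$ are gradient vectors, say $T_{m}\mathit{j}(v_{m})=\mathbb{Y}_{\mathbf{a}}(\rho)$ and $T_{m}\mathit{j}(w_{m})=\mathbb{Y}_{\mathbf{b}}(\rho)$ for suitable $\mathbf{a},\mathbf{b}\in\appas$ with (after subtracting multiples of $\mathbb{I}$, which do not change the gradient field) $\rho(\mathbf{a})=\rho(\mathbf{b})=0$. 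Then by definition of the pullback, $\Gg^{MN}_{m}(v_{m},w_{m})=\Gg^{N}_{\rho^{\otimes N}}\big(T_{m}\mathit{j}^{N}(v_{m}),T_{m}\mathit{j}^{N}(w_{m})\big)$, so everything comes down to identifying $T_{m}\mathit{j}^{N}(v_{m})$ as a gradient vector field on $\mathcal{O}^{N}$ and applying \eqref{eqn: Jordan metric tensor} on $\appa^{\otimes N}$.

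The key computation is the chain rule for $\mathit{j}^{N}(m)=\mathit{j}(m)\otimes\cdots\otimes\mathit{j}(m)$: differentiating the $N$-fold product along $v_{m}$ gives the Leibniz sum
\[
T_{m}\mathit{j}^{N}(v_{m})\,=\,\sum_{k=1}^{N}\rho\otimes\cdots\otimes\underbrace{\mathbb{Y}_{\mathbf{a}}(\rho)}_{k\text{-th slot}}\otimes\cdots\otimes\rho\,.
\]
I would then check that this tangent vector is exactly $\mathbb{Y}_{\widetilde{\mathbf{a}}}(\rho^{\otimes N})$ for the "symmetrized" element $\widetilde{\mathbf{a}}:=\sum_{k=1}^{N}\mathbb{I}\otimes\cdots\otimes\mathbf{a}\otimes\cdots\otimes\mathbb{I}$ (with $\mathbf{a}$ in the $k$-th slot), using the explicit coordinate-free description of gradient vector fields on an orbit and the fact that $\rho^{\otimes N}(\widetilde{\mathbf{a}})=N\rho(\mathbf{a})=0$. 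Feeding $\widetilde{\mathbf{a}},\widetilde{\mathbf{b}}$ into \eqref{eqn: Jordan metric tensor} on $\appa^{\otimes N}$, the term $\rho^{\otimes N}(\widetilde{\mathbf{a}})\rho^{\otimes N}(\widetilde{\mathbf{b}})$ vanishes, and $\rho^{\otimes N}(\{\widetilde{\mathbf{a}},\widetilde{\mathbf{b}}\})$ expands as a double sum over slots $k,\ell$; the cross terms $k\neq\ell$ factor as $\rho(\mathbf{a})\rho(\mathbf{b})=0$, while the $N$ diagonal terms each contribute $\rho(\{\mathbf{a},\mathbf{b}\})$. Hence $\Gg^{N}_{\rho^{\otimes N}}(\mathbb{Y}_{\widetilde{\mathbf{a}}},\mathbb{Y}_{\widetilde{\mathbf{b}}})=N\big(\rho(\{\mathbf{a},\mathbf{b}\})-\rho(\mathbf{a})\rho(\mathbf{b})\big)=N\,\Gg^{M}_{m}(v_{m},w_{m})$, which is the claim.

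The main obstacle I anticipate is the bookkeeping around the subtraction of multiples of the identity: the element $\mathbf{a}$ representing a given gradient vector is only defined up to $\mathbb{I}$, and one must make the normalization $\rho(\mathbf{a})=0$ at the start so that the "no cross terms" argument goes through cleanly; without it, the cross terms in $\rho^{\otimes N}(\{\widetilde{\mathbf{a}},\widetilde{\mathbf{b}}\})$ do not vanish and one would instead have to argue that $\widetilde{\mathbf{a}}$ can be replaced by $\widetilde{\mathbf{a}}-\rho^{\otimes N}(\widetilde{\mathbf{a}})\mathbb{I}$ without changing $\mathbb{Y}_{\widetilde{\mathbf{a}}}(\rho^{\otimes N})$. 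A secondary point is justifying that $\mathbb{Y}_{\widetilde{\mathbf{a}}}(\rho^{\otimes N})$ really equals the Leibniz sum above, i.e. that the chosen $\widetilde{\mathbf{a}}$ is the correct representing element; this is where the explicit action of a gradient vector field on the linear functions $f_{\mathbf{c}}$, $\mathbf{c}\in\appa^{\otimes N}$, is used, testing against product elements $\mathbf{c}=\mathbf{c}_{1}\otimes\cdots\otimes\mathbf{c}_{N}$ which span $\appa^{\otimes N}$.
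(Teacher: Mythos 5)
Your proposal is correct and follows essentially the same route as the paper: identify $T_{m}\mathit{j}^{N}(v_{m})$ via the Leibniz rule as the gradient vector of the symmetrized element $\widetilde{\mathbf{a}}=\sum_{k}\mathbb{I}\otimes\cdots\otimes\mathbf{a}\otimes\cdots\otimes\mathbb{I}$ (the paper's $\mathbf{a}^{N}$), then expand the Jordan metric on $\appa^{\otimes N}$. The only cosmetic difference is that you normalize $\rho(\mathbf{a})=\rho(\mathbf{b})=0$ so the cross terms vanish, whereas the paper keeps them and cancels $N(N-1)\rho(\mathbf{a})\rho(\mathbf{b})$ against $N^{2}\rho(\mathbf{a})\rho(\mathbf{b})$ explicitly.
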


\begin{proof}

We start proving that, if $v_{m}\in T_{m}M$ is such that 
\be
T_{m}\mathit{j}(v_{m})\,=\,\mathbb{Y}_{\mathbf{a}}(\rho_{m}),
\ee
then it holds
\be\label{eqn: iN-related gradient tangent vector}
T_{m}\mathit{j}^{N}(v_{m})\,=\,\mathbb{Y}_{a^{ N}}^{N}(\rho_{m}^{\otimes N}),
\ee
where $\mathbb{Y}_{a^{N}}^{N}$ is the gradient vector field on $\mathcal{O}^{N}$ associated with 
\be\label{eqn: special product elements}
\mathbf{a}^{N}\,=\,\mathbf{a}\,\otimes\,\mathbb{I}\,\otimes\,\cdots\,\otimes\,\mathbb{I} + \mathbb{I}\,\otimes\,\mathbf{a}\,\otimes\,\mathbb{I}\,\otimes\,\cdots\,\otimes\,\mathbb{I} + \cdots + \mathbb{I}\,\otimes\,\cdots\,\otimes\,\mathbb{I}\,\otimes\,\mathbf{a}.
\ee
Recall that simple elements of the form $\mathbf{b}_{1}\,\otimes\cdots\,\otimes\,\mathbf{b}_{N}$ generate $\appa^{\otimes N}$, and thus, to prove equation \eqref{eqn: iN-related gradient tangent vector}, it is sufficient to compute
\be
\langle \mathrm{d}f_{\mathbf{b}_{1}\otimes\cdots\otimes\mathbf{b}_{N}}(\rho_{m}^{\otimes N}),\, T_{m}\mathit{j}^{N}(v_{m})\rangle\,=\,\langle \mathrm{d}(\mathit{j}^{N})^{*}f_{\mathbf{b}_{1}\otimes\cdots\otimes\mathbf{b}_{N}}(m),\, v_{m}\rangle .
\ee
Denoting by $m_{t}$ a smooth curve in $M$ starting at $m$ with initial tangent vector $v_{m}$, we have
\be
\begin{split}
\langle \mathrm{d}(\mathit{j}^{N})^{*}f_{\mathbf{b}_{1}\otimes\cdots\otimes\mathbf{b}_{N}}(m),\, v_{m}\rangle &\,=\,\frac{\mathrm{d}}{\mathrm{d}t}\,\left(\rho_{m_{t}}^{\otimes N}(\mathbf{b}_{1}\,\otimes\,\cdots\,\otimes\,\mathbf{b}_{N})\right)_{t=0}\,=\, \\ 
&\,=\,\frac{\mathrm{d}}{\mathrm{d}t}\,\left(\rho_{m_{t}}(\mathbf{b}_{1})\,\cdots\,\rho_{m_{t}}(\mathbf{b}_{N})\right)_{t=0},  
\end{split}
\ee
from which equation \eqref{eqn: iN-related gradient tangent vector} follows applying the Leibniz rule and recalling that $T_{m}\mathit{j}(v_{m})\,=\,\mathbb{Y}_{\mathbf{a}}(\rho_{m})$.

We now take $v_{m},w_{m}\in T_{m}M$ such that 
\be
\begin{split}
T_{m}\mathit{j}^{N}(v_{m})&\,=\,\mathbb{Y}_{a^{ N}}^{N}(\rho_{m}^{\otimes N}) \\
T_{m}\mathit{j}^{N}(w_{m})&\,=\,\mathbb{Y}_{b^{ N}}^{N}(\rho_{m}^{\otimes N}),
\end{split}
\ee
with $\mathbf{a}^{N}$ and $\mathbf{b}^{N}$ as in equation \eqref{eqn: special product elements}.
Recalling that $\Gg^{MN}=(\mathit{j}^{N})^{*}\Gg^{N}$,  and noting that
\be
\Gg^{N}_{\rho_{m}^{\otimes N}}(\mathbb{Y}_{a^{N}}^{N}(\rho_{m}^{\otimes N}),\,\mathbb{Y}_{b^{N}}^{N}(\rho_{m}^{\otimes N}))\,=\,\rho_{m}^{\otimes N}(\{\mathbf{a}^{N},\mathbf{b}^{N}\}) - \rho_{m}^{\otimes N}(\mathbf{a}^{N})\,\rho_{m}^{\otimes N}(\mathbf{b}^{N})
\ee 
because of equation \eqref{eqn: Jordan metric tensor},  we have
\be
\begin{split}
\Gg^{MN}_{m}(v_{m},w_{m})&\,=\,\Gg^{N}_{\rho_{m}^{\otimes N}}(\mathbb{Y}_{a^{N}}^{N}(\rho_{m}^{\otimes N}),\,\mathbb{Y}_{b^{N}}^{N}(\rho_{m}^{\otimes N}))\\
&\,=\, \rho_{m}^{\otimes N}\left(\{\mathbf{a}^{N},\,\mathbf{b}^{N}\}\right) - \rho_{m}^{\otimes N}(\mathbf{a}^{N})\,\rho_{m}^{\otimes N}(\mathbf{b}^{N})\,=\, \\
&\,=\,\left(N\rho_{m}(\{\mathbf{a},\mathbf{b}\}) + N(N-1)\,\rho_{m}(\mathbf{a})\,\rho_{m}(\mathbf{b})\right) - N^{2}\,\rho_{m}(\mathbf{a})\,\rho_{m}(\mathbf{b})\,=\, \\
&\,=\, N\left(\rho_{m}(\{\mathbf{a},\mathbf{b}\}) - \rho_{m}(\mathbf{a})\,\rho_{m}(\mathbf{b})\right)\,=\, \\
&\,=\, N\,\Gg^{M}_{m}(v_{m},\,w_{m})\,
\end{split}
\ee
as desired.

\end{proof}

\section{The problem of estimation theory}\label{sec: estimation theory}

The purpose of estimation theory is to manipulate the outcomes of experiments in such a way to obtain an estimate of the ``true state'' on which the experiment has been performed.
This is done by means of a map $\mathcal{E}\colon\mathcal{X}_{n} \lra\,M$ called \grit{estimator}.
In the following, we will always consider \grit{non-constant} estimators.

Clearly, we need to come up with a way of establishing optimality for estimators.
For this purpose, we introduce a smooth  \grit{cost function} $C\colon M\times M\lra \mathbb{R}$ which is non-negative and vanishes only on the diagonal.
The choice of the cost function is essentially left to the ingenuity of the theoretician, and it is difficult to outline a general selection methodology.
However, in some cases, the choice of the cost function is suggested by the context. 

Starting with a cost function $C$, and writing $\mathcal{E}_{j}\equiv\mathcal{E}(\mathbf{x}_{j})$  for the value of the estimator  at the $j$-th element of the outcome space $\mathcal{X}_{n}$, we introduce the function $L\colon M\times M\lra \mathbb{R}$ given by
\be\label{eqn: parametric loss functional}
L(m_{1},m_{2})\,:=\,\sum_{j=1}^{n}\,C(m_{1},\mathcal{E}_{j})\,p^{j}(\mathrm{m}_{2})\,=\,\sum_{j=1}^{n}\,C(m_{1},\mathcal{E}_{j})\,\rho_{m_{2}}(\mathbf{m}^{j}),
\ee
where $(p^{1}(m_{2}),\cdots,p^{n}(m_{2}))=\mathit{j}^{c}(m_{2})\,=\mathfrak{m}^{\star}(\rho_{m_{2}})$, and $\mathfrak{m}$ is the measurement procedure ``generating'' the statistical   model $(M,\mathit{j}^{c},\Delta_{n}^{+})$ associated with the parametric model $(M,\mathit{j},\mathcal{O})$ of states on $\appa$ under investigation.
It is clear from equation \eqref{eqn: parametric loss functional} that if the cost function $C$ is constant, then $L$ does not actually depend on $m_{2}$, and the problem of estimation theory as will be now developed will lose meaning.

The function $L$ may be seen as the expectation value of the real-valued, $M$-parametric random variable $C(m_{1},\mathcal{E}(\cdot))$ on $\mathcal{X}_{n}$ with respect to the $M$-parametric probability distribution $\mathfrak{m}(\rho_{m_{2}})$ on $\mathcal{X}_{n}$.
Therefore, $L$ measures  how centered is the probability distribution generated by   $C(m_{1},\mathcal{E}(\cdot))$.

Let $m_{\star}\in M$ and denote by $L_{\star}$ the function
\be\label{eqn: parametric loss functional 2}
L_{\star}(m)\,:=\,L(m,m_{\star}).
\ee
The estimator $\mathcal{E}$ is called \grit{stationary} for the cost function $C$ at $m_{\star}$  if $L_{\star}$ has an extremum at $m=m_{\star}$, that is, if 
\be\label{eqn: stationary estimator}
\left(VL_{\star}\right)(m_{\star})\,=\,0
\ee
for all vector fields $V$ on $M$.
The estimator $\mathcal{E}$ is called \grit{unbiased} for the cost function $C$ at  $m_{\star}\in M$ if the function $L_{\star}$ has a  minimum at $m=m_{\star}$, and it is called \grit{locally unbiased} for the cost function $C$ at $m_{\star}$ if $L_{\star}$ has a   local minimum at $m=m_{\star}$.
In general, for a given  cost function $C$, unbiased estimators need not exist.

Now, we may define an $M$-parametric self-adjoint element $\mathcal{M}$ in $\appa$ setting
\be
\mathcal{M}_{m_{1}}\,:=\,\sum_{j=1}^{n}\,C(m_{1},\mathcal{E}_{j})\, \mathbf{m}^{j}\,.
\ee
This element clearly depends also on the estimator $\mathcal{E}$ and on the measurement procedure $\mathfrak{m}$.
Moreover, it allows us to write the function $L$ as the expectation value of $\mathcal{M}_{m_{1}}$ with respect to the state $\rho_{m_{2}}$ according to
\be
L(m_{1},m_{2})\,=\,\rho_{m_{2}}\left(\mathcal{M}_{m_{1}}\right).
\ee 

The estimation problem may be approached from two different perspectives of increasing difficulty:
\begin{itemize}
\item the regular measurement procedure $\mathfrak{m}$ is fixed, and the unknown of the problem is the estimator $\mathcal{E}$;
\item both the regular measurement procedure $\mathfrak{m}$ and the estimator $\mathcal{E}$ are considered unknown.
\end{itemize}
Clearly, the first case reduces to the classical problem of estimation, and may be faced relying on well-known methods like the maximum likelyhood estimator.
The limit on the precision is then governed by the Cramer-Rao bound (see section \ref{sec: cramer-rao bound for manifold-valued estimators}).
The second case is definitely more difficult to address because the freedom in the choice of the regular measurement procedure adds another layer of complexity.
However, in this case, the precision is governed by the Helstrom bound  (see section \ref{subsec: helstrom bound for manifold-valued estimators}), and allows for a sharpening of the Cramer-Rao bound.
Indeed, the freedom in choosing the measurement procedure reflects in the possibility of consider different ``classical scenarios'', and choose the one with the lowest Cramer-Rao bound.

Unfortunately, for both forms of the problem, there is no algorithm to solve the problem in full generality, and a case-by-case analysis is mandatory.

\begin{remark}[Stationary estimators for Euclidean cost function]\label{subsec: stationary estimators for Euclidean cost function}
 
Suppose that $M$ is explicitely realized as an $n$-dimensional submanifold of $\mathbb{R}^{N}$ for some positive $N\in\mathbb{N}$ with $n\leq N$.
In this context, a common choice in parameter estimation theory is to consider  the  cost function $C$ which is the   Euclidean distance on $\mathbb{R}^{N}\times\mathbb{R}^{N}$ restricted to $M\times M$.
Specifically, we have
\be
C(m_{1},m_{2})\,:=\,\frac{1}{2}\,\left| m_{1} - m_{2}\right|^{2}\,,
\ee
so that the function $L$ reads
\be
L(m_{1},m_{2})\,:=\,\frac{1}{2}\,\sum_{j=1}^{n}\,\left| m_{1} - \mathcal{E}_{j}\right|^{2}\,p^{j}(\mathrm{m}_{2}).
\ee
This type of cost function is called a \grit{Euclidean cost function} for obvious reasons.
Clearly, the Euclidean cost function $C$ depends on the actual realization of the (a priori abstract) manifold $M$ into a suitable $\mathbb{R}^{N}$.
In particular, because of Whitney's embedding theorem, given a parameter manifold $M$ we can always build a Euclidean cost function.
Of course, the actual usefulness of such a cost function is in principle not clear and should be investigated case by case.
However, it often happens in concrete models that the parameter manifold $M$ is ``naturally'' immersed in some given $\mathbb{R}^{N}$ by construction, and thus the Euclidean cost function unavoidably presents itself from the start.

If $\{\theta^{1},...,\theta^{n}\}$ is a local system of coordinates on $M$, it is easy to see that being stationary at $m_{\star}$ is equivalent to (see equation \eqref{eqn: stationary estimator})
\be\label{eqn: stationary estimator for open submanifold of Euclidean space}
m_{\star}^{k}(\theta) = \mathrm{E}_{m_{\star}(\theta)}[\mathcal{E}^{k}] \qquad \forall k=1,...,N \mbox{ and } \;\forall r=1,...,n,
\ee
where $m_{1}^{k}$ is the smooth function on $M$ obtained by composing the canonical immersion of $M$ in $\mathbb{R}^{N}$ with the canonical projection on the $k$-th factor, $\mathcal{E}^{k}$ is the real-valued random variable on $\mathcal{X}$ obtained by composing $\mathcal{E}$ with the canonical immersion of $M$ in $\mathbb{R}^{N}$ and with the canonical projection on the $k$-th factor, and where $\mathrm{E}_{m_{\star}}[\cdot]$ denote the expectation value with respect to the $M$-parametric probability distribution $\mathbf{p}_{m_{\star}}$.

Since $C > 0$ for all $(m_{1},m_{2})\in M\times M$ unless $m_{1}=m_{2}$, in which case it vanishes, we see that a stationary estimator at $m\in M$ is also locally unbiased at $m\in M$.

When $M$ is an open subset of $\mathbb{R}^{N}$ and $\{\theta^{1},...,\theta^{N}\}$ is a system of Cartesian coordinates, and when equation \eqref{eqn: stationary estimator for open submanifold of Euclidean space} holds for all $m\in M$,  we recover the  standard  definition of an unbiased estimator used in classical and quantum estimation theory \cite[ch. 4]{A-N-2000}.

\end{remark}
 
\section{The Cramer-Rao bound}\label{sec: cramer-rao bound for manifold-valued estimators}

Here, we recall Hendrik's derivation of the Cramer-Rao bound for estimators with values in a manifold \cite{Hendriks-1991} when the underlying outcome space is discrete and finite.
This gives a clear geometric picture of the Cramer-Rao  bound which does not depend on the existence of a privileged coordinatization of the parameter space $M$ as it is the case in most of the existing literature (see for instance  \cite[ch. 4]{A-N-2000} where it is clearly stated that the notion of unbiased estimator developed there is coordinate-dependent, as well as  \cite{G-O-2006,Hendriks-1991,O-C-1995})


\vsp

Let $(M,\mathit{j},\Delta_{n}^{+})$ be a  parametric statistical model.
We refer to definition \ref{def: parametric statistical model} and the paragraph right after it, as well as to  remark \ref{ex: classical statistical models} for a discussion of the regularity properties satisfied by the model $(M,\mathit{j},\Delta_{n}^{+})$.
Recall that the metric $\Gg^{M}$ determined by equation \eqref{eqn: pullback metric} coincides with the   Fisher-Rao  tensor on $M$ as determined by standard methods of information geometry \cite{AAVV1-1987,Amari-2016,A-N-2000}.
We assume that $\Gg^{M}$ is invertible.

In order to obtain the generalized Cramer-Rao bound for a stationary estimator, we need to exploit the geometrical properties of the product structure of the manifold $M\times M$.
We will now recall these geometrical properties following \cite[sec. 2]{C-DC-L-M-M-V-V-2018}, to which we refer for the explicit proofs.

First of all, we note that there are two projections $\pi_{l}$ and $\pi_{r}$ from $M\times M$ to $M$ given by
\be
\begin{split}	
\pi_{l}(m_{1},m_{2})&\,:=\,m_{1} \\
\pi_{r}(m_{1},m_{2})&\,:=\,m_{2},
\end{split}
\ee
and there is also the diagonal immersion $i_{d}$ of  $M$ into $M\times M$ given by
\be
\mathrm{i}_{d}(m)\,:=\,(m,\,m)\,.
\ee
Given a vector field $X$ on $M$, we may define its left and right lift to be the vector fields $X_{l}$ and $X_{r}$ on $M\times M$ characterized by
\be
\begin{split}
X_{l}(\pi_{l}^{*}f)&\,=\,\pi_{l}^{*}(X(f)) \\
X_{r}(\pi_{r}^{*}f)&\,=\,\pi_{r}^{*}(X(f)) 
\end{split}
\ee
for every smooth function $f$ on $M$.
It is possible to prove that every vector field $X$ on $M$ is $i_{d}$-related with the vector field $X_{l}+X_{r}$ on $M \times M$ \cite[sec. 2]{C-DC-L-M-M-V-V-2018}.


If $\mathcal{E}$ is a stationary estimator at $m_{\star}\in M$ then $L_{\star}$ has an extremum at $m_{\star}$, and this is equivalent to 
\be
\left(\mathrm{i}_{d} \left(X_{l} L \right)\right)_{m=m_{\star}}\,=\,0
\ee
for all vector fields $X_{l}$ on $M\times M$.
We assume that $\mathcal{E}$ is a stationary estimator for all $m_{\star}\in M$.
This means that the function $\mathrm{L}=\mathrm{i}_{d} \left(X_{l} L \right)$ identically vanishes.
Consequently, given an arbitrary vector field $Y$ on $M$, we also have
\be
0\,=\, Y\left(i_{d}^{*}\mathrm{L}\right) \,=\, Y\left(i_{d}^{*}\left(X_{l} L\right)\right) \,=\, i_{d}^{*}\left((Y_{l}X_{l} + Y_{r} X_{l}) L\right),
\ee
which means
\be\label{eqn: for the C-R 1}
i_{d}^{*}\left(Y_{l}X_{l}  L\right) \,=\,- i_{d}^{*}\left(Y_{r} X_{l} L\right)\,.
\ee

Since $\mathcal{E}$ is stationary at every $m_{\star}$, it follows that  the Hessian form $\mathrm{H}_{\star}$ of $L_{\star}$ at $m_{\star}$ is well defined and we have 
\be
\mathrm{H}_{\star}(X(m_{\star}),\,Y(m_{\star}))\,:=\,\left(Y\,X\,L_{\star}\right)(m_{\star}).
\ee
A moment of reflection shows that
\be
\left(Y\,X\,L_{\star}\right)(m_{\star})\,=\,\left(i_{d}^{\star}\left(Y_{l}X_{l}  L\right)\right)(m_{\star})
\ee
so that
\be
\mathrm{H}_{\star}(X(m_{\star}),\,Y(m_{\star}))\,=\,-\left(i_{d}^{\star}\left(Y_{r}X_{l}  L\right)\right)(m_{\star}) 
\ee
because of equation \eqref{eqn: for the C-R 1}.
Set
\be
\begin{split}
C_{\mathcal{E}_{j}}(m)&\,:=\,C(m,\mathcal{E}_{j}) \\
\end{split}
\ee
so that we have
\be
L(m_{1},m_{2})\,:=\,\sum_{j=1}^{n}\,C_{\mathcal{E}_{j}}(m_{1})\,p^{j}(\mathrm{m}_{2})
\ee
and we obtain
\be\label{eqn: almost cramer rao}
\mathrm{H}_{\star}(X(m_{\star}),\,Y(m_{\star}))\,=\, -\sum_{j=1}^{n}\,\left(X_{l}C_{\mathcal{E}_{j}}\right)(m_{\star})\, \left(Y_{r}p^{j}\right)(m_{\star}) \,.
\ee
Introducing the real-valued random variables on the probability space $(\mathcal{X}_{n},\mathbf{p}(m_{\star}))$ given by
\be\label{eqn: C-R 01}
\begin{split}
F_{X}^{\star}(\mathbf{x}_{j})&\,:=\,\left(X_{l}C_{\mathcal{E}_{j}}\right)(m_{\star})\\
G_{Y}^{\star}(\mathbf{x}_{j})&\,:=\,\left(Y_{r}\ln(p^{j})\right)(m_{\star}),
\end{split}
\ee
we can rewrite the right hand side of equation \eqref{eqn: almost cramer rao} as
\be
\mathrm{H}_{\star}(X(m_{\star}),\,Y(m_{\star}))\,=\,-  \mathrm{E}_{\star}\left[F_{X}^{\star}\,G_{Y}^{ \star}\right]\,,
\ee
where $\mathrm{E}_{\star}\left[\cdot\right]$ denotes the expectation value with respect to the probability measure $\mathbf{p}(m_{\star})$. 
The expression 
\be
\langle F,G\rangle_{ \star }\,:=\,\mathrm{E}_{ \star }\left[F\,G\right]
\ee
is an inner product on the space of random variables on the probability space $(\mathcal{X}_{n},\mathbf{p}(m_{\star})$, and the Cauchy-Schwarz inequality may be applied to obtain
\be
\left(\mathrm{H}_{ \star }(X(m_{\star}),\,Y(m_{\star}))\right)^{2}\,\leq\,\mathrm{E}_{ \star  }\left[F_{X}^{ \star }\,F_{X}^{ \star }\right]\,\mathrm{E}_{ \star }\left[G_{Y}^{ \star} \,G_{Y}^{ \star }\right]\,.
\ee
Then,  a direct computation shows that 
\be
\begin{split}
\mathrm{E}_{ \star }\left[G_{Y}^{ \star }\,G_{Y}^{ \star }\right]&\,=\,\sum_{j=1}^{n}\left(Y_{r}\ln(p^{j})\right)(m_{\star})\left(Y_{r}\ln(p^{j})\right)(m_{\star})\,p^{j}(m_{\star})\,=\,\\
&\,=\,\Gg^{M}(Y(m_{\star}),\,Y(m_{\star})).
\end{split}
\ee

Next, we introduce the expression 
\be\label{eqn: C-R 02}
\begin{split}
\mathcal{C}(X(m_{\star}),Y(m_{\star}))&\,:=\,\mathrm{E}_{\star}\left[F_{X}^{ \star }\,F_{Y}^{ \star}\right],
\end{split}
\ee
which according to \eqref{eqn: C-R 01} implicitly contains the cost function $C$, so that we can write equation \eqref{eqn: almost cramer rao} as
\be\label{eqn: almost cramer rao 2}
\left(\mathrm{H}_{ \star }(X(m_{\star}),\,Y(m_{\star}))\right)^{2}\,\leq\,\mathcal{C}(X(m_{\star}),X(m_{\star}))\;\Gg^{M}(Y(m_{\star}),\,Y(m_{\star}))\,.
\ee
Clearly, $\mathcal{C}$ depends on the cost function $C$ and the estimator $\mathcal{E}$.

Now,  fix $X_{m_{\star}}\in T_{m_{\star}}M$, and define the function $H\colon T_{m_{\star}}M\lra\mathbb{R}$ given by
\be\label{eqn: for C-R-2}
Y(m_{\star})\equiv Y_{m_{\star}}\,\mapsto\,H(Y_{m_{\star}})\,:=\, \mathrm{H}_{ \star }(X_{m_{\star}},\,Y_{m_{\star}}).
\ee
This function admits a maximum on the unit sphere determined by the Fisher-Rao metric.
Indeed, the Fisher-Rao unit sphere in $T_{m_{\star}}M$ is compact because the Fisher-Rao metric is  a Riemannian metric (positive).
Let $Y^{0}_{m_{\star}}$ be a point on which $H$ is maximum.
Then, we may always find a real number $\lambda$ such that
\be\label{eqn: for C-R-3}
H(Y_{m_{\star}})\,=\, \lambda\, \Gg^{M}(Y^{0}_{m_{\star}},Y_{m_{\star}})\,,
\ee 
so that 
\be\label{eqn: for C-R-4}
H(Y_{m_{\star}}^{0})\,=\, \lambda\, \Gg^{M}(Y^{0}_{m_{\star}},Y_{m_{\star}}^{0})\,=\, \lambda
\ee
because $Y^{0}_{m_{\star}}$ lies on the Fisher-Rao unit sphere.

With an evident abuse of notation, we denote by $\mathrm{H}_{ \star }(X_{m_{\star}})$   the covector in $T_{m_{\star}}^{\star}M$ acting as 
\be\label{eqn: for C-R-5}
\langle\mathrm{H}_{ \star }(X_{m_{\star}}),\,Z_{m_{\star}}\rangle\,:=\,\mathrm{H}_{ \star }(Z_{m_{\star}},\,X_{m_{\star}})\;\;\;\forall \,Z_{m_{\star}}\in\,T_{m_{\star}}M,
\ee
and by $\Gg^{M}\left(  Y^{0}_{m_{\star}}\right)$   the covector in $T_{m_{\star}}^{\star}M$ given by
\be\label{eqn: for C-R-6}
\langle \Gg^{M}\left( Y^{0}_{m_{\star}}\right),\,Z_{m_{\star}}\rangle\,:=\, \Gg^{M}\left(  Y^{0}_{m_{\star}},Z_{m_{\star}}\right)\;\;\;\forall \,Z_{m_{\star}}\in\,T_{m_{\star}}M\,.
\ee
Then, comparing equation \eqref{eqn: for C-R-2} with equation \eqref{eqn: for C-R-3}, equation \eqref{eqn: for C-R-5} and \eqref{eqn: for C-R-6} allows us to conclude that 
\be\label{eqn: for C-R-7}
\mathrm{H}_{ \star }(X_{m_{\star}})\,=\,\Gg^{M}\left( \lambda\,Y^{0}_{m_{\star}}\right),
\ee
which, assuming $\Gg^{M}$ to be invertible, is equivalent to 
\be\label{eqn: for C-R-8}
(\Gg^{M})^{-1}\left(\mathrm{H}_{ \star }(X_{m_{\star}}),\alpha_{m_{\star}}\right)\,=\,\langle\alpha_{m_{\star}},\,\lambda\,Y^{0}_{m_{\star}}\rangle
\ee
for all covectors $\alpha_{m_{\star}}\in T_{m_{\star}}^{\star}M$.
In particular, setting $\alpha_{m_{\star}}=\mathrm{H}_{ \star }(X_{m_{\star}})$ we get
\be\label{eqn: for C-R-9}
(\Gg^{M})^{-1}\left(\mathrm{H}_{ \star }(X_{m_{\star}}),\mathrm{H}_{ \star }(X_{m_{\star}})\right)\,=\,\langle\mathrm{H}_{ \star }(X_{m_{\star}}),\,\lambda\,Y^{0}_{m_{\star}}\rangle\,=\,\lambda\,H( Y^{0}_{m_{\star}})
\ee
because of equation \eqref{eqn: for C-R-5} and \eqref{eqn: for C-R-2}.
Now, equation \eqref{eqn: for C-R-2} together with equation \eqref{eqn: for C-R-4} and equation \eqref{eqn: for C-R-9} imply that
\be\label{eqn: almost cramer rao 3}
\begin{split}
\left(\mathrm{H}_{ \star }(Y_{m_{\star}}^{0},\,X_{m_{\star}})\right)^{2}&\,=\,\left(H(Y_{m_{\star}}^{0})\right)^{2}\,=\, \lambda\,\,H(Y_{m_{\star}}^{0})\,=\, \\
&\,=\,\left(\Gg^{M}\right)^{-1}\,\left( \mathrm{H}_{m_{\star}}(X_{m_{\star}}),\,\mathrm{H}_{m_{\star}}(X_{m_{\star}})\right)\,.
\end{split} 
\ee
Eventually, recalling that $Y_{m_{\star}}^{0}$ lies on the Fisher-Rao unit sphere,  equation \eqref{eqn: almost cramer rao 2} and \eqref{eqn: almost cramer rao 3} lead us to the generalized Cramer-Rao bound
\be\label{eqn: C-R bound}
\mathcal{C}(X_{m_{\star}},X_{m_{\star}})\,\geq\,\left(\Gg^{M}\right)^{-1}\,\left(\mathrm{H}_{ \star }(X_{m_{\star}}),\,\mathrm{H}_{ \star }(X_{m_{\star}})\right)\,.
\ee
If the Hessian form of $L_{\star}$ at $m_{\star}$ is invertible, we define the covariance bivector $\mathrm{Cov}$ as
\be
\label{eqn: Covdef}
\mathrm{Cov}(\xi_{m_{\star}},\eta_{m_{\star}})\,:=\,\mathcal{C}\left(\mathrm{H}_{ \star }^{-1}(\xi_{m_{\star}}),\,\mathrm{H}_{ \star }^{-1}(\eta_{m_{\star}})\right),
\ee
where $\xi_{m_{\star}},\eta_{m_{\star}}\in T_{m_{\star}}^{\star}M$.  We may then rewrite the generalized Cramer-Rao bound in terms of covectors. 
We proved the following:

\begin{proposition}\label{prop: Cramer-Rao bound}
Let $(M,\mathit{j},\Delta_{n}^{+})$ be a  parametric statistical model for which $\Gg^{M}$ is invertible.
Let $C$ be a cost function and let $\mathcal{E}$ be a stationary estimator for $C$ at $m_{\star}$.
If the Hessian form of $L_{\star}$ at $m_{\star}$ is invertible, then we have the  generalized Cramer-Rao bound 
\be\label{eqn: Cramer-Rao bound}
\mathrm{Cov}(\xi_{m_{\star}},\xi_{m_{\star}})\geq\,\left(\Gg^{M}\right)^{-1}\,\left(\xi_{m_{\star}},\,\xi_{m_{\star}}\right)\,
\ee
for all $\xi_{m_{\star}}\in T^{\star}_{m{\star}}M$.
\end{proposition}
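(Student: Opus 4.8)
The plan is to assemble the pieces already derived in the discussion preceding the statement; since essentially all the analytic content has been developed there, the remaining task is to package inequality \eqref{eqn: C-R bound} into the covector form \eqref{eqn: Cramer-Rao bound}. Recall that for an arbitrary fixed $X_{m_\star}\in T_{m_\star}M$ one has, from the Cauchy--Schwarz inequality for the inner product $\langle\cdot,\cdot\rangle_\star$ on random variables over $(\mathcal{X}_n,\mathbf{p}(m_\star))$ together with the Lagrange-multiplier argument identifying the maximizer $Y^0_{m_\star}$ of $H$ on the Fisher--Rao unit sphere of $T_{m_\star}M$, the bound
\be
\mathcal{C}(X_{m_\star},X_{m_\star})\,\geq\,\left(\Gg^{M}\right)^{-1}\left(\mathrm{H}_{\star}(X_{m_\star}),\,\mathrm{H}_{\star}(X_{m_\star})\right).
\ee
First I would note that because $\mathcal{E}$ is stationary at $m_\star$ the Hessian form $\mathrm{H}_\star$ of $L_\star$ at $m_\star$ is well defined and symmetric, so that under the invertibility hypothesis the assignment $X_{m_\star}\mapsto\mathrm{H}_\star(X_{m_\star})$ defined by \eqref{eqn: for C-R-5} is a linear isomorphism $T_{m_\star}M\to T^\star_{m_\star}M$, whose inverse $\mathrm{H}_\star^{-1}$ is precisely the object entering the definition \eqref{eqn: Covdef} of $\mathrm{Cov}$.

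Then, given an arbitrary covector $\xi_{m_\star}\in T^\star_{m_\star}M$, I would set $X_{m_\star}:=\mathrm{H}_\star^{-1}(\xi_{m_\star})$, so that $\mathrm{H}_\star(X_{m_\star})=\xi_{m_\star}$, and substitute into the displayed bound to get
\be
\mathcal{C}\left(\mathrm{H}_\star^{-1}(\xi_{m_\star}),\,\mathrm{H}_\star^{-1}(\xi_{m_\star})\right)\,\geq\,\left(\Gg^{M}\right)^{-1}(\xi_{m_\star},\xi_{m_\star}).
\ee
By \eqref{eqn: Covdef} the left-hand side is exactly $\mathrm{Cov}(\xi_{m_\star},\xi_{m_\star})$, which yields \eqref{eqn: Cramer-Rao bound} and completes the proof. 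Conceptually, this is just a change of variables: the content is that $\mathrm{H}_\star$ converts the ``index-up'' bound \eqref{eqn: C-R bound}, where $X_{m_\star}$ is a tangent vector, into the ``index-down'' bound, where the free slot is a covector.

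There is no real obstacle here, as the substantive steps are already in place; the points deserving a moment of care are (i) the well-definedness of $\mathrm{H}_\star$ as a bilinear form, which genuinely requires stationarity of $\mathcal{E}$ at $m_\star$, (ii) the legitimacy of the maximization of $H$ on the Fisher--Rao unit sphere, which uses that $\Gg^M$ is a genuine (positive) Riemannian metric so that this sphere is compact, and the reality of the resulting multiplier $\lambda$, and (iii) the essential use of invertibility of $\mathrm{H}_\star$ in forming $\mathrm{H}_\star^{-1}(\xi_{m_\star})$, which is exactly why this hypothesis appears in the statement of the proposition. The identification of $\mathrm{E}_\star[G^\star_Y G^\star_Y]$ with $\Gg^M(Y(m_\star),Y(m_\star))$ relies on $\Gg^M$ being the pullback of the Fisher--Rao metric, as established through \eqref{eqn: pullback metric} and the remarks of Section \ref{sec: statistical models}.
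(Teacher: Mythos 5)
Your proposal is correct and follows essentially the same route as the paper: the paper derives the tangent-vector bound \eqref{eqn: C-R bound} via the same Cauchy--Schwarz and unit-sphere maximization argument and then, exactly as you do, passes to covectors through the definition \eqref{eqn: Covdef} using the invertibility of $\mathrm{H}_{\star}$. The only difference is that the paper leaves this final change of variables implicit (``We may then rewrite the generalized Cramer-Rao bound in terms of covectors''), whereas you spell it out.
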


Let us stress that, because of the  regularity properties satisfied by the model $(M,\mathit{j},\Delta_{n}^{+})$ (see definition \ref{def: parametric statistical model}, the paragraph right after it, and remark \ref{ex: classical statistical models}) and because of the assumed invertibility  of $\Gg^{M}$, the formulation of the Cramer-Rao bound given in proposition \ref{prop: Cramer-Rao bound} refers to the case in which the support of the considered probability distributions does not depend on the element $m$ in the parameter manifold $M$.
It is worth noting that in the literature, when the support of the considered probability density functions may depend on the parameter, there is still a version of the Cramer-Rao bound, the so-called Cramer-Rao-Leibniz bound, see for instance \cite{L-BS-W-P-D-2020}.

A stationary estimator $\mathcal{E}$ which saturates the Cramer-Rao bound for every $v_{m}$ is called \grit{efficient}.
The Cramer-Rao bound is related to the cost function $C$ and to the estimator $\mathcal{E}$, however, it is expressed in terms of the (inverse of the) Fisher-Rao metric tensor on $M$ which is a geometrical object on $M$ which is completely independent of the cost function and the estimator.
Note, however, that the expression \eqref{eqn: Covdef} is invariant under rescaling the cost function $C$, because  the expression $\mathcal{C}$ by \eqref{eqn: C-R 02} contains such a scaling factor quadratically, and this is cancelled  because the inverse of the Hessian enters quadratically into \eqref{eqn: Covdef}.

\begin{remark}[The Cramer-Rao bound for Euclidean cost functions]\label{ex: C-R bound for statistical models of open submanifolds of Euclidean space}
 
The ``standard form'' of the Cramer-Rao inequality used in classical information geometry is obtained when we  $M$ and the cost function $C$ are  as in remark \ref{subsec: stationary estimators for Euclidean cost function}.
In this case, a direct computation shows that, in local coordinates around $m_{\star}$,  the components Hessian form of $L_{\star}$ at every stationary point  are given by
\be
\left(\mathrm{H}_{\star}\right)_{jk}\,=\,\left(\delta_{rs}\,\frac{\partial m^{r}}{\partial \theta^{j}}\frac{\partial m^{s}}{\partial \theta^{k}}\right)(m_{\star}).
\ee

Assuming that $M$ is open in the ambient manifold $\mathbb{R}^{N}$, and taking $\{\theta^{1},...,\theta^{N}\}$ to be the Cartesian coordinates associated with the canonical projections of $\mathbb{R}^{N}$ on $\mathbb{R}$ we immediately see that
\be
\left(\mathrm{H}_{\star}\right)_{jk}\,=\, \delta_{jk} .
\ee
Therefore, writing
\be
\left(\mathrm{Cov}(m_{\star})\right)^{jk}\,\equiv\,\mathrm{Cov}(\mathrm{d}\theta^{j}(m_{\star}), \mathrm{d}\theta^{k}(m_{\star})),
\ee
a direct computation shows that the covariance matrix $\left(\mathrm{Cov}(m_{\star})\right)^{jk}$ at the point $m_{\star}$ for which $\mathcal{E}$ is a stationary estimator reads
\be\label{eqn: covariance for open submanifold of vector space  with Euclidean cost function}
\begin{split}
\left(\mathrm{Cov}(m_{\star})\right)^{jk}&\,=\,\mathrm{E}_{p_{\star}}\left[\left(\mathcal{E}^{j} - \mathrm{E}_{p_{\star}}\left[\mathcal{E}^{j}\right]\right)\,\left(\mathcal{E}^{k} - \mathrm{E}_{p_{\star}}\left[\mathcal{E}^{k}\right]\right)\right]
\end{split},
\ee
which is essentially the form usually found in standard textbooks on estimation theory in statistics.
The ``standard form'' of the Cramer-Rao bound follows immediately. 
\end{remark}

\section{The Helstrom bound}\label{subsec: helstrom bound for manifold-valued estimators}

The Cramer-Rao bound found in section \ref{sec: cramer-rao bound for manifold-valued estimators} applies to \grit{parametric statistical models}. 
As such, it depends only on the Fisher-Rao metric on $M$ which, in turn, depends on the properties of the Abelian algebra underlying the parametric statistical model.
Accordingly, if $(M,\mathit{j}^{c},\Delta_{n}^{+})$ is the parametric statistical model associated with a parametric model of states $(M,\mathit{j},\mathcal{O})$ on the possibly noncommutative $C^{\star}$-algebra $\appa$, the Cramer-Rao bound for $(M,\mathit{j}^{c},\Delta_{n}^{+})$ ``does not feel'' the possible noncommutativity of the algebra $\appa$.
However, it is possible to formulate a bound which ``feels'' the possible non-commutativity of $\appa$, and this bound is related with the metric tensor $\Gg^{M}$ and its relation with $\Gg^{Mc}$.
This bound is essentially the $C^{\star}$-algebraic formulation of the Helstrom bound used in quantum information theory, and the content of the following proposition will be the key point to formulate the Helstrom bound in the $C^{\star}$-algebraic framework.

\begin{proposition}\label{prop: Jordan metric majorizes classical jordan metric}
Let $(M,\mathit{j},\mathcal{O})$ be a parametric model of states on the finite-dimensional $C^{\star}$-algebra $\appa$, and let  $\Gg^{M}$ be the symmetric covariant tensor on $M$ defined by equation \eqref{eqn: pullback metric}.
Let  $(M,\mathit{j}^{c},\Delta_{n}^{+})$ be a parametric statistical   model associated with  $(M,\mathit{j},\mathcal{O})$,  and let $\Gg^{Mc}$ be the symmetric covariant tensor on $M$ defined by equation \eqref{eqn: classical pullback metric}.
Then, we have
\be
\Gg^{M}_{m}(v_{m},v_{m})\,\geq\,\Gg^{Mc}_{m}(v_{m},v_{m})
\ee
for every $m\in M$ and every $v_{m}\in T_{m}M$.
\end{proposition}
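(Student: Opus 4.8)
The plan is to reduce both sides to explicit expressions involving a single self-adjoint element of $\appa$ representing $v_m$, and then to dominate the ``classical'' side by the ``algebraic'' side outcome by outcome, via the Cauchy--Schwarz inequality for the GNS form of $\rho_m$ together with the completeness relation of the POVM underlying the chosen measurement procedure.

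First I would fix $m\in M$, put $\rho_m:=\mathit{j}(m)$, and use that gradient vector fields span $T_{\rho_m}\mathcal{O}$ (section~\ref{sec: differential geometry of states}) to choose $\mathbf{a}\in\appas$ with $T_m\mathit{j}(v_m)=\mathbb{Y}_{\mathbf{a}}(\rho_m)$. Since $\mathbb{Y}_{\mathbf{a}}$ is unchanged under $\mathbf{a}\mapsto\mathbf{a}+c\,\mathbb{I}$, $c\in\mathbb{R}$, I may normalize $\rho_m(\mathbf{a})=0$; then equation~\eqref{eqn: Jordan metric tensor} gives $\Gg^{M}_{m}(v_m,v_m)=\rho_m(\mathbf{a}^{2})$. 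On the classical side, write $\mathbf{m}^{j}=\mathfrak{m}(\mathbf{e}^{j})$ and $p^{j}_{m}=\rho_m(\mathbf{m}^{j})$, all strictly positive because $\mathfrak{m}$ is regular (i.e.\ $\mathit{j}^{c}(M)\subseteq\Delta_{n}^{+}$). Differentiating $t\mapsto\rho_{m_t}(\mathbf{m}^{j})$ along a curve in $M$ with initial velocity $v_m$ shows that $T_m\mathit{j}^{c}(v_m)$ has components $\dot p^{j}=(\mathbb{Y}_{\mathbf{a}}(\rho_m))(\mathbf{m}^{j})=\rho_m(\{\mathbf{a},\mathbf{m}^{j}\})$, again using $\rho_m(\mathbf{a})=0$; and, since the structure constants of $\cappa_n$ are diagonal (example~\ref{ex: the probability simplex}), a short computation identifies $\Gg^{Mc}_m(v_m,v_m)$ with the familiar Fisher--Rao expression $\sum_{j}(\dot p^{j})^{2}/p^{j}_{m}$. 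The proposition thus amounts to
\be
\sum_{j=1}^{n}\frac{\big(\rho_m(\{\mathbf{a},\mathbf{m}^{j}\})\big)^{2}}{\rho_m(\mathbf{m}^{j})}\;\leq\;\rho_m(\mathbf{a}^{2})\,.
\ee

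The crux---and the only step that is not purely formal---is a term-by-term estimate of the summands. I would let $\mathbf{s}_{j}\in\appas$ be a self-adjoint square root of the positive element $\mathbf{m}^{j}$, note that self-adjointness of $\rho_m$ gives $\rho_m(\{\mathbf{a},\mathbf{m}^{j}\})=\mathrm{Re}\,\rho_m(\mathbf{a}\mathbf{m}^{j})=\mathrm{Re}\,\rho_m\big((\mathbf{s}_{j}\mathbf{a})^{\dagger}\mathbf{s}_{j}\big)$, and then apply Cauchy--Schwarz for the positive semidefinite sesquilinear form $\langle\mathbf{b},\mathbf{c}\rangle_{\rho_m}:=\rho_m(\mathbf{b}^{\dagger}\mathbf{c})$ on $\appa$ with $\mathbf{b}=\mathbf{s}_{j}\mathbf{a}$ and $\mathbf{c}=\mathbf{s}_{j}$:
\be
\big(\rho_m(\{\mathbf{a},\mathbf{m}^{j}\})\big)^{2}\;\leq\;\big|\rho_m(\mathbf{a}\mathbf{m}^{j})\big|^{2}\;\leq\;\rho_m(\mathbf{a}\mathbf{m}^{j}\mathbf{a})\,\rho_m(\mathbf{m}^{j})\,.
\ee
Dividing by $p^{j}_{m}>0$ and summing over $j$, the completeness relation $\sum_{j}\mathbf{m}^{j}=\mathbb{I}$ (equation~\eqref{eqn: unitality of measurement procedure}) collapses the right-hand side to $\sum_{j}\rho_m(\mathbf{a}\mathbf{m}^{j}\mathbf{a})=\rho_m(\mathbf{a}^{2})=\Gg^{M}_{m}(v_m,v_m)$, which is exactly the desired inequality.

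The remaining subtleties are bookkeeping and are all supplied by the excerpt: that the normalization $\rho_m(\mathbf{a})=0$ is harmless, that $\rho_m(\{\mathbf{a},\mathbf{m}^{j}\})$ is real and equal to $\mathrm{Re}\,\rho_m(\mathbf{a}\mathbf{m}^{j})$ (so replacing it by the modulus in Cauchy--Schwarz only discards a non-negative quantity), that the square root $\mathbf{s}_j$ lies in $\appa$, and that $p^{j}_{m}>0$ so the division is legitimate. It is worth recording for later that equality throughout forces the Cauchy--Schwarz equality condition $\mathbf{s}_{j}\mathbf{a}\propto\mathbf{s}_{j}$ in the GNS space for each $j$; this is the algebraic reason why the Helstrom bound of section~\ref{subsec: helstrom bound for manifold-valued estimators} may fail to be attainable when the relevant symmetric logarithmic derivatives do not commute, as in example~\ref{ex: a mixed state qubit model}.
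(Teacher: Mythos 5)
Your proof is correct, but it follows a genuinely different route from the paper's. The paper never decomposes the measurement procedure into its POVM elements: it introduces the classical element $\mathbf{a}_{c}\in\cappa_{n}$ whose gradient vector field is the pushforward of $\mathbb{Y}_{\mathbf{a}}$ under $\mathfrak{m}^{\star}$, rewrites the classical variance as the Jordan correlation $\rho(\{\mathbf{a},\mathfrak{m}(\mathbf{a}_{c})\})-\rho(\mathbf{a})\rho(\mathfrak{m}(\mathbf{a}_{c}))$, applies the Cauchy--Schwarz inequality for the inner product $\rho(\{\cdot,\cdot\})-\rho(\cdot)\rho(\cdot)$ on $\appas$, and then invokes Kadison's inequality $\mathfrak{m}(\mathbf{a}_{c}^{2})\geq\mathfrak{m}(\mathbf{a}_{c})\mathfrak{m}(\mathbf{a}_{c})$ for the positive unital map $\mathfrak{m}$ to close the estimate. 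You instead work outcome by outcome in the Braunstein--Caves style: writing $\rho_m(\{\mathbf{a},\mathbf{m}^{j}\})=\mathrm{Re}\,\rho_m\bigl((\mathbf{s}_{j}\mathbf{a})^{\dagger}\mathbf{s}_{j}\bigr)$ with $\mathbf{s}_{j}^{2}=\mathbf{m}^{j}$, applying Cauchy--Schwarz for the sesquilinear form $\rho_m(\mathbf{b}^{\dagger}\mathbf{c})$, and summing with $\sum_{j}\mathbf{m}^{j}=\mathbb{I}$. Each step checks out: the normalization $\rho_m(\mathbf{a})=0$ is harmless, the identification of $\Gg^{Mc}$ with $\sum_{j}(\dot p^{j})^{2}/p^{j}_{m}$ is the standard coordinate form of the Fisher--Rao metric and agrees with the paper's Jordan-product definition on $\cappa_{n}$, and regularity of $\mathfrak{m}$ guarantees $p^{j}_{m}>0$. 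What your approach buys is elementarity and sharpness: it avoids Kadison's inequality, sidesteps the division by the classical variance that the paper's argument implicitly assumes nonzero, and delivers an explicit per-outcome equality condition (proportionality of $\mathbf{s}_{j}\mathbf{a}$ and $\mathbf{s}_{j}$ in the GNS representation, together with reality of $\rho_m(\mathbf{a}\mathbf{m}^{j})$ --- note that the $(\mathrm{Re}\,z)^{2}\leq|z|^{2}$ step must also saturate). What the paper's approach buys is that it is phrased entirely at the level of the abstract positive unital map and the intrinsic geometry of the two orbits, without choosing the delta-function basis of $\cappa_{n}$, which is the formulation more likely to survive the passage to infinite-dimensional or non-discrete outcome spaces that the authors announce as future work.
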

\begin{proof}
According to the definition of the SLD given in equation \eqref{eqn: SLD}, given an arbitrary tangent vector $v_{m}\in T_{m}M$, there is a gradient vector field $\mathbb{Y}_{\mathbf{a}}$ on $\mathcal{O}$ such that
\be
T_{m}\mathit{j}(v_{m})\,=\,\mathbb{Y}_{\mathbf{a}}(\rho_{m}).
\ee
Consequently, we have  (recalling \eqref{eqn: Jordan metric tensor}) 
\be
\Gg^{M}_{m}(v_{m},v_{m})\,=\,\Gg_{\rho_{m}}(\mathbb{Y}_{\mathbf{a}}(\rho_{m}),\mathbb{Y}_{\mathbf{a}}(\rho_{m}))\,=\,\rho_{m}(\mathbf{a}^{2}) - \left(\rho_{m}(\mathbf{a})\right)^{2}\,.
\ee

On the other hand, by definition, we have
\be
\Gg^{Mc}=(\mathit{j}^{c})^{*}\Gg_{FR}\,=\,(\mathfrak{m}^{\star}\,\circ\,\mathit{j})^{*}\Gg_{FR}\,=\,\mathit{j}^{*}\left((\mathfrak{m}^{\star})^{*}\Gg_{FR}\right)\,,
\ee
which means 
\be
\Gg^{Mc}_{m}(v_{m},v_{m})\,=\,\left((\mathfrak{m}^{\star})^{*}\Gg_{FR}\right)_{\rho_{m}}(\mathbb{Y}_{\mathbf{a}}(\rho_{m}),\mathbb{Y}_{\mathbf{a}}(\rho_{m}))\,,
\ee
and thus we have to prove that
\be\label{eqn: jordan metric maj FR 1}
\left((\mathfrak{m}^{\star})^{*}\Gg_{FR}\right)_{\rho_{m}}(\mathbb{Y}_{\mathbf{a}}(\rho_{m}),\mathbb{Y}_{\mathbf{a}}(\rho_{m}))\,\leq\,\rho_{m}(\mathbf{a}^{2})  - \left(\rho_{m}(\mathbf{a})\right)^{2}
\ee
to prove the proposition.

We note that, fixed  any $\rho\in\mathcal{O}$ and given an arbitrary non-zero gradient tangent vector  $\mathbb{Y}_{\mathbf{a}}(\rho)$, there is an element $\mathbf{a}_{c}\in\cappa_{n} \equiv C(\mathcal{X}_{n})$ and a gradient tangent vector $\mathbb{Y}_{a_{c}}(\mathfrak{m}^{\star}(\rho))$ at $\mathfrak{m}^{\star}(\rho)\in\mathcal{O}^{c}$ such that
\be\label{eqn: gradient tangent vector 2}
T_{\rho}\mathfrak{m}^{\star}(\mathbb{Y}_{\mathbf{a}}(\rho))\,=\,\mathbb{Y}_{a_{c}}(\mathfrak{m}^{\star}(\rho)),
\ee
and a direct computation shows that $\mathbf{a}_{c}$ is characterized by the property
\be\label{eqn: gradient tangent vector 1}
\rho(\{\mathbf{a},\mathfrak{m}(\mathbf{b}_{c})\}) - \rho(\mathbf{a})\,\rho(\mathfrak{m}(\mathbf{b}_{c}))\,=\,\rho(\mathfrak{m}(\mathbf{a}_{c}\,\mathbf{b}_{c})) - \rho(\mathfrak{m}(\mathbf{a}_{c}))\,\rho(\mathfrak{m}(\mathbf{b}_{c}))
\ee
for all $\mathbf{b}_{c}\in\cappa_{n}$.
Therefore, we have
\be
\begin{split}
\left((\mathfrak{m}^{\star})^{*}\Gg_{FR}\right)_{\rho}(\mathbb{Y}_{\mathbf{a}}(\rho),\mathbb{Y}_{\mathbf{a}}(\rho))&\,=\, (\Gg_{FR})_{\mathfrak{m}^{\star}(\rho)}\left(T_{\rho}\mathfrak{m}^{\star}(\mathbb{Y}_{\mathbf{a}}(\rho)),\,T_{\rho}\mathfrak{m}^{\star}(\mathbb{Y}_{\mathbf{a}}(\rho))\right)\,=\, \\
&\,=\, (\Gg_{FR})_{\mathfrak{m}^{\star}(\rho)}\left(\mathbb{Y}_{a_{c}}(\mathfrak{m}^{\star}(\rho)),\,\mathbb{Y}_{a_{c}}(\mathfrak{m}^{\star}(\rho))\right)\,=\, \\
&\,=\,\rho(\mathfrak{m}(\mathbf{a}_{c}^{2})) - \left(\rho(\mathfrak{m}(\mathbf{a}_{c}))\right)^{2}\,.
\end{split}
\ee
Recalling equation \eqref{eqn: jordan metric maj FR 1}, we see that if the inequality
\be
\rho(\mathfrak{m}(\mathbf{a}_{c}^{2})) - \left(\rho(\mathfrak{m}(\mathbf{a}_{c}))\right)^{2}\,\leq\,\rho(\mathbf{a}^{2})  - \left(\rho(\mathbf{a})\right)^{2}
\ee
holds for all $\rho,\mathbf{a},\mathbf{a}_{c}$ and $\mathfrak{m}$ satisfying equation \eqref{eqn: gradient tangent vector 2}, then the proposition is proved.

Next, by means of equation \eqref{eqn: gradient tangent vector 1}, we write
\be
\rho(\mathfrak{m}(\mathbf{a}_{c}^{2})) - \left(\rho(\mathfrak{m}(\mathbf{a}_{c}))\right)^{2}\,=\,\rho(\{\mathbf{a},\mathfrak{m}(\mathbf{a}_{c})\}) - \rho(\mathbf{a})\,\rho(\mathfrak{m}(\mathbf{a}_{c}))\,,
\ee
and since $\rho(\{\cdot,\cdot\}) - \rho(\cdot)\,\rho(\cdot)$ is an inner product on the space of self-adjoint elements of $\appa$, we may apply the Cauchy-Schwarz inequality to obtain
\be
\left(\rho(\mathfrak{m}(\mathbf{a}_{c}^{2})) - \left(\rho(\mathfrak{m}(\mathbf{a}_{c}))\right)^{2} \right)^{2} \,\leq\,\left(\rho(\mathbf{a}^{2})-(\rho(\mathbf{a}))^{2}\right)\,\, \left(\rho(\mathfrak{m}(\mathbf{a}_{c})\,\mathfrak{m}(\mathbf{a}_{c}))  - \left(\rho(\mathfrak{m}(\mathbf{a}_{c}))\right)^{2} \right)\,.
\ee
Now, $\mathfrak{m}$ is a positive  unital map, and thus it  satisfies Kadison's inequality 
\be
\mathfrak{m}(\mathbf{a}_{c}^{2})\,\geq \mathfrak{m}(\mathbf{a}_{c})\,\mathfrak{m}(\mathbf{a}_{c}), 
\ee
from which it follows that
\be
\rho(\mathfrak{m}(\mathbf{a}_{c}^{2}))\,\geq\,\rho(\mathfrak{m}(\mathbf{a}_{c})\,\mathfrak{m}(\mathbf{a}_{c}))\,.
\ee
Consequently, assuming that $\rho(\mathfrak{m}(\mathbf{a}_{c}^{2})) - \left(\rho(\mathfrak{m}(\mathbf{a}_{c}))\right)^{2} \neq 0$, we have
\be
\frac{\rho(\mathfrak{m}(\mathbf{a}_{c})\,\mathfrak{m}(\mathbf{a}_{c}))  - \left(\rho(\mathfrak{m}(\mathbf{a}_{c}))\right)^{2}}{\rho(\mathfrak{m}(\mathbf{a}_{c}^{2})) - \left(\rho(\mathfrak{m}(\mathbf{a}_{c}))\right)^{2}}\,\leq\,1
\ee
and thus
\be
\begin{split}
\rho(\mathfrak{m}(\mathbf{a}_{c}^{2})) - \left(\rho(\mathfrak{m}(\mathbf{a}_{c}))\right)^{2}& \,\leq\, \rho(\mathbf{a}^{2})-(\rho(\mathbf{a}))^{2}\,,
\end{split}
\ee
and the proposition is proved.

\end{proof}

From the proof of proposition  \ref{prop: Jordan metric majorizes classical jordan metric}, we easily obtain the following corollary.

\begin{corollary}\label{corollary: attainability of Helstrom bound}
Let $(M,\mathit{j},\mathcal{O})$ be a parametric model of states on the $C^{\star}$-algebra $\appa$.
Suppose there is a unital, Abelian $C^{\star}$-subalgebra $\cappa\subseteq\appa$ such that, for all $v_{m}\in T_{m}M$, the SLD $\mathbb{Y}_{\mathbf{a}}(\rho_{m})$ of $v_{m}$ at $\rho_{m}=\mathit{j}(m)$ given by 
\be
T_{m}\mathit{j}(v_{m})\,=\,\mathbb{Y}_{\mathbf{a}}(\rho_{m})
\ee
is such that $\mathbf{a}\in\cappa$.
Suppose also that the measurement procedure $\mathfrak{m}:=\mathit{i}_{\cappa}$ given by the natural inclusion of $\cappa$ in $\appa$ gives rise to a parametric statistical model $(M,\mathit{j}^{c},\Delta_{n}^{+})$ associated with $(M,\mathit{j},\mathcal{O})$.
Then, it holds
\be
\Gg^{M}_{m}(v_{m},v_{m})\,=\,\Gg^{Mc}_{m}(v_{m},v_{m}).
\ee

\end{corollary}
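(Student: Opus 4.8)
The plan is to re-examine the chain of (in)equalities appearing in the proof of Proposition~\ref{prop: Jordan metric majorizes classical jordan metric} and to show that, under the present hypotheses, every one of them collapses to an equality. Recall from that proof that $\Gg^{M}_{m}(v_{m},v_{m})=\rho_{m}(\mathbf{a}^{2})-(\rho_{m}(\mathbf{a}))^{2}$, where $\mathbf{a}\in\appas$ is any element with $T_{m}\mathit{j}(v_{m})=\mathbb{Y}_{\mathbf{a}}(\rho_{m})$ --- and by hypothesis we may take $\mathbf{a}\in\cappa$ --- while $\Gg^{Mc}_{m}(v_{m},v_{m})=\rho_{m}(\mathfrak{m}(\mathbf{a}_{c}^{2}))-(\rho_{m}(\mathfrak{m}(\mathbf{a}_{c})))^{2}$ for the element $\mathbf{a}_{c}$ characterized by equation~\eqref{eqn: gradient tangent vector 1}. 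Here $\mathfrak{m}=\mathit{i}_{\cappa}$; after identifying the abstract $\cappa_{n}$ with $\cappa$ (legitimate, since a finite-dimensional Abelian $C^{\star}$-algebra is isomorphic to $C(\mathcal{X}_{n})$ for $n=\dim\cappa$), the map $\mathfrak{m}$ is simply the inclusion, hence a unital $\star$-homomorphism: in particular it is positive, so it is a bona fide measurement procedure in the sense of Definition~\ref{def: measurement procedure}, and, crucially, it is \emph{multiplicative}.

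The next step is to record the consequences of multiplicativity and of the commutativity of $\cappa$. Writing $\langle\mathbf{x},\mathbf{y}\rangle_{\rho_{m}}:=\rho_{m}(\{\mathbf{x},\mathbf{y}\})-\rho_{m}(\mathbf{x})\rho_{m}(\mathbf{y})$ for the symmetric, positive-semidefinite bilinear form on $\appas$ already used in the proof of Proposition~\ref{prop: Jordan metric majorizes classical jordan metric}, multiplicativity gives $\mathfrak{m}(\mathbf{a}_{c}^{2})=\mathbf{a}_{c}^{2}$, so Kadison's inequality used there becomes an equality and $\Gg^{Mc}_{m}(v_{m},v_{m})=\langle\mathbf{a}_{c},\mathbf{a}_{c}\rangle_{\rho_{m}}$, while $\Gg^{M}_{m}(v_{m},v_{m})=\langle\mathbf{a},\mathbf{a}\rangle_{\rho_{m}}$. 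Moreover, for $\mathbf{x},\mathbf{y}\in\cappa$ one has $\mathbf{x}\mathbf{y}=\{\mathbf{x},\mathbf{y}\}$ by commutativity, so the defining relation~\eqref{eqn: gradient tangent vector 1} for $\mathbf{a}_{c}$ simplifies to $\langle\mathbf{a}-\mathbf{a}_{c},\mathbf{b}_{c}\rangle_{\rho_{m}}=0$ for every self-adjoint $\mathbf{b}_{c}\in\cappa$.

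The hypothesis $\mathbf{a}\in\cappa$ now enters decisively: both $\mathbf{a}$ and $\mathbf{a}_{c}$ are admissible choices of the test element $\mathbf{b}_{c}$. Taking $\mathbf{b}_{c}=\mathbf{a}$ gives $\langle\mathbf{a},\mathbf{a}\rangle_{\rho_{m}}=\langle\mathbf{a}_{c},\mathbf{a}\rangle_{\rho_{m}}$; taking $\mathbf{b}_{c}=\mathbf{a}_{c}$ gives $\langle\mathbf{a},\mathbf{a}_{c}\rangle_{\rho_{m}}=\langle\mathbf{a}_{c},\mathbf{a}_{c}\rangle_{\rho_{m}}$; and symmetry of $\langle\cdot,\cdot\rangle_{\rho_{m}}$ ties the two together into $\Gg^{M}_{m}(v_{m},v_{m})=\langle\mathbf{a},\mathbf{a}\rangle_{\rho_{m}}=\langle\mathbf{a}_{c},\mathbf{a}_{c}\rangle_{\rho_{m}}=\Gg^{Mc}_{m}(v_{m},v_{m})$, which is the assertion. (Equivalently, one may observe $\langle\mathbf{a}-\mathbf{a}_{c},\mathbf{a}-\mathbf{a}_{c}\rangle_{\rho_{m}}=0$ and use the vanishing relation once more.)

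The only mildly delicate points --- which is where I expect the ``main obstacle'' such as it is --- are the bookkeeping of the identification $\cappa_{n}\cong\cappa$ together with the check that $\mathit{i}_{\cappa}$ indeed defines a regular measurement procedure, and the remark that the element $\mathbf{a}_{c}$ furnished by~\eqref{eqn: gradient tangent vector 1} can be chosen self-adjoint (as it must be, since it labels a gradient vector field on the orbit of states of $\cappa$), so that it is a legitimate test element. Once these routine verifications are in place, the corollary reduces to bilinearity and symmetry of $\langle\cdot,\cdot\rangle_{\rho_{m}}$ together with the degeneracy of Kadison's inequality for $\star$-homomorphisms.
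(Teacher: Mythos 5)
Your argument is correct, and it follows exactly the route the paper intends: the paper gives no separate proof of the corollary, stating only that it is ``easily obtained'' from the proof of Proposition~\ref{prop: Jordan metric majorizes classical jordan metric}, and your proposal fills in precisely those details --- multiplicativity of the inclusion degenerates Kadison's inequality, commutativity turns \eqref{eqn: gradient tangent vector 1} into the orthogonality relation $\langle\mathbf{a}-\mathbf{a}_{c},\mathbf{b}_{c}\rangle_{\rho_{m}}=0$, and the hypothesis $\mathbf{a}\in\cappa$ lets you test against $\mathbf{a}$ and $\mathbf{a}_{c}$ to force equality of the norms. (The only cosmetic shortcut you could have taken is to note that, under these hypotheses, $\mathbf{a}_{c}=\mathbf{a}$ itself already satisfies \eqref{eqn: gradient tangent vector 1}, which gives the equality in one line; your orthogonality argument is equivalent.)
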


Now, let $(M,\mathit{j},\mathcal{O})$ be a parametric model of states on the $C^{\star}$-algebra $\appa$, and let $(M,\mathit{j}^{c},\Delta_{n}^{+})$ be a parametric statistical model associated with  $(M,\mathit{j},\mathcal{O})$.
Assume $(M,\mathit{j},\mathcal{O})$ and $(M,\mathit{j}^{c},\Delta_{n}^{+})$ to be such that $\Gg^{M}$ and $\Gg^{Mc}$  are invertible.
Let $C$ be a cost function and $\mathcal{E}$ an estimator as in section \ref{sec: estimation theory}.
Assume $\mathcal{E}$ is a stationary estimator at $m_{\star}$, and let $C_{\mathcal{E}_{j}}\colon M\ra\mathbb{R}$ be the smooth function given by $C_{\mathcal{E}_{j}}(m):=C(m,\mathcal{E}_{j})$, where $\mathcal{E}_{j}\equiv \mathcal{E}(\mathbf{x}_{j})$ with $\mathbf{x}_{j}\in\mathcal{X}_{n}$.

According to the results of section \ref{sec: cramer-rao bound for manifold-valued estimators} (see equations \eqref{eqn: C-R 01}, \eqref{eqn: C-R 02}, and \eqref{eqn: C-R bound}), given $v_{m_{\star}},w_{m_{\star}}\in T_{m_{\star}}M$, the bilinear form 
\be
\mathcal{C}(v_{m_{\star}},w_{m_{\star}})\,:=\,\sum_{j=1}^{n}\,v_{m_{\star}}(C_{\mathcal{E}_{j}})\,w_{m_{\star}}(C_{\mathcal{E}_{j}})\,p^{j}(m_{\star}),
\ee
where $v_{m_{\star}}(C_{\mathcal{E}_{j}})$ is the  derivative of $C_{\mathcal{E}_{j}}$  in the direction of $v_{m_{\star}}$ evaluated at $m_{\star}\in M$ (and similarly for $w_{m_{\star}}(C_{\mathcal{E}_{j}})$), satisfies the Cramer-Rao bound given by 
\be
\mathcal{C}(v_{m_{\star}},v_{m_{\star}})\,\geq\, \left(\Gg^{Mc}_{m_{\star}}\right)^{-1}\,\left(\mathrm{H}_{m_{\star}}(v_{m_{\star}}),\,\mathrm{H}_{m_{\star}}(v_{m_{\star}})\right),
\ee
where $\Gg^{Mc}$ is the Fisher-Rao metric on $M$ seen as a parametric statistical model in $\Delta_{n}^{+}$, and $\mathrm{H}_{m_{\star}}$ is the Hessian form  of the function $L_{m_{\star}}\colon M \ra \mathbb{R}$ given by $L_{m_{\star}}(m_{1}):=L(m_{1},m_{\star})$ at the point $m_{1}=m_{\star}$ (see equation \eqref{eqn: parametric loss functional}).

Then, proposition \ref{prop: Jordan metric majorizes classical jordan metric} states that 
\be
 \Gg^{M}_{m}(w_{m},w_{m})\,\geq\,\Gg^{Mc}_{m}(w_{m},w_{m})
\ee
for every $w_{m}\in T_{m}M$.
Consequently, we also obtain that 
\be
\left(\Gg_{m}^{M }\right)^{-1}(\alpha_{m},\alpha_{m})\,\leq\, \left(\Gg_{m}^{Mc}\right)^{-1}(\alpha_{m},\alpha_{m})
\ee
for every $\alpha_{m}\in T_{m}^{\star}M$ (see \cite[Ex. 1.2.12]{Bhatia-2007}), and the Cramer-Rao bound in equation \eqref{eqn: Cramer-Rao bound} allows us to state that
\be\label{eqn: Helstrom bound}
\mathcal{C}(v_{m_{\star}},v_{m_{\star}})\,\geq\, \left(\Gg_{m_{\star}}^{Mc}\right)^{-1} \,\left(\mathrm{H}_{m_{\star}}(v_{m_{\star}}),\,\mathrm{H}_{m_{\star}}(v_{m_{\star}})\right)\,\geq\, \left(\Gg_{m_{\star}}^{M}\right)^{-1} \,\left(\mathrm{H}_{m_{\star}}(v_{m_{\star}}),\,\mathrm{H}_{m_{\star}}(v_{m_{\star}})\right).
\ee
We proved the following:

\begin{proposition}
Let $(M,\mathit{j}^{c},\Delta_{n}^{+})$ be the parametric statistical model associated with a parametric model of states $(M,\mathit{j},\mathcal{O})$.
Assume that both $\Gg^{M}$ and $\Gg^{Mc}$ are invertible. 
Let $C$ be a cost function and let $\mathcal{E}$ be a stationary estimator for $C$ at $m$.
If the Hessian form of $L_{m_{\star}}$ at $m_{\star}$ is invertible, then we have the  generalized Helstrom bound  
\be\label{eqn: Helstrom bound 2}
\mathrm{Cov}(\xi_{m_{\star}},\xi_{m_{\star}})\,\geq\,\left(\Gg^{Mc}_{m_{\star}}\right)^{-1}\,\left(\xi_{m_{\star}},\,\xi_{m_{\star}}\right)\geq\,\left(\Gg^{M}_{m_{\star}}\right)^{-1}\,\left(\xi_{m_{\star}},\,\xi_{m_{\star}}\right)
\ee
for all $\xi_{m_{\star}}\in T^{\star}_{m{\star}}M$.
\end{proposition}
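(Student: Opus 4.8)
The plan is to derive the generalized Helstrom bound \eqref{eqn: Helstrom bound 2} as a two-step sharpening of the generalized Cramer-Rao bound of Proposition~\ref{prop: Cramer-Rao bound}, using Proposition~\ref{prop: Jordan metric majorizes classical jordan metric} together with the order-reversing behaviour of matrix inversion on positive-definite forms. No new computation is needed: all the ingredients have been assembled in the paragraphs preceding the statement, and the task is to justify chaining them correctly.

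First I would note that, by Definition~\ref{def: parametric statistical model} and Remark~\ref{ex: classical statistical models}, the triple $(M,\mathit{j}^{c},\Delta_{n}^{+})$ is itself a parametric statistical model to which Proposition~\ref{prop: Cramer-Rao bound} applies, with the role of the pulled-back metric $\Gg^{M}$ there played by $\Gg^{Mc}=(\mathit{j}^{c})^{*}\Gg_{FR}$ of equation~\eqref{eqn: classical pullback metric}. The hypotheses are met: $C$ is a cost function, $\mathcal{E}$ is a stationary estimator at $m_{\star}$, the Hessian form $\mathrm{H}_{m_{\star}}$ of $L_{m_{\star}}$ is invertible by assumption, and $\Gg^{Mc}$ is assumed invertible. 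Moreover the bilinear form $\mathcal{C}$ of \eqref{eqn: C-R 02}, the Hessian $\mathrm{H}_{m_{\star}}$, and hence the covariance bivector $\mathrm{Cov}$ of \eqref{eqn: Covdef} are built solely from the functions $C_{\mathcal{E}_{j}}$ and the probabilities $p^{j}$, so they are literally the same objects whether one regards the model as valued in $\mathcal{O}$ or in $\Delta_{n}^{+}$. Applying Proposition~\ref{prop: Cramer-Rao bound} therefore yields the left inequality, $\mathrm{Cov}(\xi_{m_{\star}},\xi_{m_{\star}})\geq\bigl(\Gg^{Mc}_{m_{\star}}\bigr)^{-1}(\xi_{m_{\star}},\xi_{m_{\star}})$ for all $\xi_{m_{\star}}\in T^{\star}_{m_{\star}}M$.

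Second I would invoke Proposition~\ref{prop: Jordan metric majorizes classical jordan metric}, giving $\Gg^{M}_{m_{\star}}(w_{m_{\star}},w_{m_{\star}})\geq\Gg^{Mc}_{m_{\star}}(w_{m_{\star}},w_{m_{\star}})$ for every $w_{m_{\star}}\in T_{m_{\star}}M$; since both tensors are positive definite and invertible at $m_{\star}$, the standard fact that $A\geq B>0$ implies $B^{-1}\geq A^{-1}$, as quoted from \cite[Ex.~1.2.12]{Bhatia-2007}, gives $\bigl(\Gg^{M}_{m_{\star}}\bigr)^{-1}(\xi_{m_{\star}},\xi_{m_{\star}})\leq\bigl(\Gg^{Mc}_{m_{\star}}\bigr)^{-1}(\xi_{m_{\star}},\xi_{m_{\star}})$. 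Concatenating this with the inequality of the previous step produces exactly \eqref{eqn: Helstrom bound 2}. The argument has no real obstacle; the only point deserving care is the bookkeeping of the previous paragraph, namely confirming that Proposition~\ref{prop: Cramer-Rao bound} is legitimately applicable to $(M,\mathit{j}^{c},\Delta_{n}^{+})$ with $\Gg^{Mc}$ in place of $\Gg^{M}$ and that the $\mathrm{Cov}$ appearing there is the one in the statement. One may also recall, as in the remark following Proposition~\ref{prop: Cramer-Rao bound}, that the whole chain is insensitive to rescaling $C$, so the bound is well posed.
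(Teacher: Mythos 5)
Your proposal is correct and follows essentially the same route as the paper: apply the generalized Cramer--Rao bound of Proposition~\ref{prop: Cramer-Rao bound} to the statistical model $(M,\mathit{j}^{c},\Delta_{n}^{+})$ with $\Gg^{Mc}$ in place of the Fisher--Rao pullback, then combine Proposition~\ref{prop: Jordan metric majorizes classical jordan metric} with the order-reversal of inversion on positive forms (\cite[Ex.~1.2.12]{Bhatia-2007}) and chain the two inequalities. Your explicit check that $\mathcal{C}$, $\mathrm{H}_{m_{\star}}$, and hence $\mathrm{Cov}$ depend only on the functions $C_{\mathcal{E}_{j}}$ and the probabilities $p^{j}$ is exactly the bookkeeping the paper's argument relies on implicitly.
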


This is the \grit{Helstrom bound} for parametric models of states on a $C^{\star}$-algebra.
Indeed,  when $\appa$ is the algebra $\bh$ of bounded operators on the Hilbert space $\hh$ of a finite-level quantum system, $\mathcal{O}$ is the orbit of faithful density operators on $\hh$, $M$ is an open subset of some $\mathbb{R}^{k}$ with $k\in\mathbb{N}$.
Then, in accordance with remark \ref{subsec: stationary estimators for Euclidean cost function}, the cost function $C$ may be taken to be the Euclidean distance on $\mathbb{R}^{k}\times\mathbb{R}^{k}$ pulled back on $M\times M$, and a direct computation shows that  equation \eqref{eqn: Helstrom bound}  reduces to the so-called \grit{Helstrom  bound}  used in quantum estimation theory or quantum metrology  \cite{Helstrom-1967,Helstrom-1968,Helstrom-1969,Paris-2009}.

\begin{remark}[Helstrom bound for multiple-round models]
If we consider multiple rounds as in the end of section \ref{sec: statistical models}, that is, we set $\mathcal{X}=\mathcal{Y}^{N}$,  then proposition \ref{prop: Jordan metric for N rounds is N times the Jordan metric} implies that the Helstrom bound can be written as
\be\label{eqn: Helstrom bound on multiple rounds}
\begin{split}
\mathcal{C}(v_{m},v_{m})&\geq \left(\Gg_{m}^{McN}\right)^{-1} \left(\mathrm{H}_{m}(v_{m}),\mathrm{H}_{m}(v_{m})\right)\\
&\geq\left(\Gg_{m}^{MN}\right)^{-1} \left(\mathrm{H}_{m}(v_{m}),\mathrm{H}_{m}(v_{m})\right)\\
&\geq\frac{1}{N}\,\left(\Gg_{m}^{M}\right)^{-1}\,\left(\mathrm{H}_{m}(v_{m}),\mathrm{H}_{m}(v_{m})\right),
\end{split}
\ee
and this equation allows for the asymptotic analysis of the bound.
\end{remark}

The Helstrom bound is a universal bound for all the possible parametric statistical models associated with a given parametric model of states on a given $C^{\star}$-algebra.
This makes it quite a remarkable bound.

It is clear that, independently of the cost function and of the estimator we may choose, the Helstrom bound may be saturated if and only if 
\be
\Gg^{M}_{m}(v_{m},v_{m})\,=\,\Gg^{Mc}_{m}(v_{m},v_{m}).
\ee
Then, corollary \ref{corollary: attainability of Helstrom bound} shows that this is in principle always true for one-dimensional models because we can always take the unital, Abelian $C^{\star}$-subalgebra  generated by the self-adjoint element $\mathbf{a}$ associated with the SLD of a given $v_{m}$ at $\rho_{m}$, and we are in the hypothesis of the corollary.
However, it is also clear that for higher-dimensional models like the one in example \ref{ex: a mixed state qubit model}, this strategy may not be available.

\section{Conclusion}

We presented a preliminary account of the formulation of estimation theory in the context of parametric models of states on finite-dimensional $C^{\star}$-algebras.
The aim is to set the stage for the development of a mathematical formulation of estimation theory that is able to deal with the classical and quantum case ``at the same time'' by simply switching between commutative and noncommutative algebras.

After reviewing the differential geometric properties of the space of states $\stsp$ of an arbitrary finite-dimensional $C^{\star}$-algebra $\appa$, we introduced the notion of parametric model of states on $\appa$.
Then, following what is done in quantum information theory using POVMs,  we considered how the explicit choice of a positive linear map from $\appa$ to a suitable commutative $C^{\star}$-algebra $\cappa$ gives rise to the notion of parametric statistical model of states associated with the  starting parametric model of states on $\appa$.
This parametric statistical model may be viewed as a classical-like snapshot of the given parametric model of states on the possibly noncommutative algebra $\appa$, and the Cramer-Rao bound for manifold-valued estimators  is available for this model.

The fact that when $\appa$ is noncommutative there is more than one such classical-like snapshot means that  there is a Cramer-Rao bound for every classica-like snapshot of a given parametric model of states on  $\appa$.
This instance leads us to  reformulate the so-called Helstrom bound  to the case of a parametric model of states on a generic $C^{\star}$-algebra and not just the algebra of bounded linear operators on a Hilbert space as it is customarily done in quantum information theory. 
The Helstrom bound gives a lower bound for all the possible Cramer-Rao bounds associated with the classical-like snapshots of a given parametric model of states on $\appa$. 
The possibility of considering also multiple-round models is briefly discussed, and the Helstrom bound derived in this context will be the starting point for the asymptotic theory of estimation theory in the $C^{\star}$-algebraic framework we will deal with in future works.

As already remarked in the introduction, this work should be interpreted as a  preliminary step toward a more general understanding of  classical and quantum estimation theory.
Accordingly, there are different instances  that are left open  for further developments.
For instance, it is necessary to understand the general conditions for the attainability of the Helstrom bound for parametric models of states of dimension greater or equal than 2.
It is also necessary to understand how to formulate other relevant bounds like the RLD-bound and the Holevo bound used in quantum information theory in the $C^{\star}$-algebraic framework,  as well as to understand how to perform the transition  to the infinite dimensional case.
From another point of view, it would be interesting to understand a suitable $C^{\star}$-algebraic counterpart of the Amari-Cencov 3-tensor and the affine geometry it encodes in order generalize to the quantum case the understanding of the role of Frobenius manifolds recently investigated in the classical case \cite{C-M-2020,J-T-Z-2020}.
We plan to address these issues in future works.

\footnotesize

\addcontentsline{toc}{section}{References}

\end{document}